\documentclass[journal]{IEEEtran}

\hyphenation{op-tical net-works semi-conduc-tor hete-ro-geneous}
\usepackage{wrapfig}
\usepackage[T1]{fontenc}
\usepackage[bookmarks=false]{hyperref}
\usepackage[cmex10]{amsmath}
\usepackage{algorithm}
\usepackage[noend]{algpseudocode}
\usepackage{pgfplots}
\makeatletter
\def\BState{\State\hskip-\ALG@thistlm}
\makeatother
\usepackage{algpseudocode}
\usepackage{cite}
\usepackage{algorithm}

\usepackage{graphicx,subcaption}
\usepackage{pstricks}
\usepackage{color}
\usepackage{amssymb}
\usepackage{amsthm}
\usepackage{pgfplots}
\usepackage{tikz}
\usepackage{relsize}
\usepackage{bbm}
\usepackage{bm}
\usepackage[font={small}]{caption}
\usepackage{comment,color,soul}
\usepackage[normalem]{ulem}
\usepackage{pst-plot}
\usepackage{tkz-euclide}
\usepackage{pgfplots}
\usepackage{mathtools}

\usepackage[noend]{algpseudocode}
\usepackage{graphicx}
\newsavebox\CBox
\def\textBF#1{\sbox\CBox{#1}\resizebox{\wd\CBox}{\ht\CBox}{\textbf{#1}}}

\makeatletter
\def\algbackskip{\hskip-\ALG@thistlm}
\makeatother

\pgfplotsset{compat=1.11}
\usetikzlibrary{calc,positioning,arrows.meta,quotes}
\usepackage{lipsum}

\makeatletter
\newcommand{\doublewidetilde}[1]{{%
  \mathpalette\double@widetilde{#1}%
}}
\newcommand{\double@widetilde}[2]{%
  \sbox\z@{$\m@th#1\widetilde{#2}$}%
  \ht\z@=.9\ht\z@
  \widetilde{\box\z@}%
}
\makeatother

\makeatletter
\def\thm@space@setup{\thm@preskip=2pt
        \thm@postskip=2pt \itshape}
\makeatother

\newtheoremstyle{newstyle}
{} 
{} 
{\mdseries} 
{} 
{\bfseries} 
{.} 
{ } 
{} 

\theoremstyle{newstyle}

\newtheorem{theorem}{Theorem}
\newtheorem{lemma}{Lemma}

\theoremstyle{definition}

\newtheorem*{example*}{Example}
\newtheorem{definition}{Definition}

\theoremstyle{remark}
\newtheorem{remark}{Remark}

\IEEEoverridecommandlockouts
\newcommand{\Expc}{\mathbb{E}}
\newcommand{\Prob}{\Pr}
\newcommand{\bA}{\mathbf{A}}
\newcommand{\bx}{\mathbf{x}}

\newcommand*\Bell{\ensuremath{\boldsymbol\ell}}

\begin{document}

\title{Coded Computation over Heterogeneous Clusters}
\author{Amirhossein Reisizadeh, Saurav Prakash, Ramtin Pedarsani, Amir Salman Avestimehr
    \thanks{A. Reisizadeh and R. Pedarsani are with the Department
    of Electrical and Computer Engineering, University of California, Santa Barbara, Santa Barbara, CA 93106 USA (e-mail: reisizadeh@ucsb.edu; ramtin@ece.ucsb.edu).}
    \thanks{S. Prakash and A. S. Avestimehr are with the the Department
    of Electrical and Computer Engineering, University of Southern California, Los Angeles, CA 90089 USA
    (e-mail: sauravpr@usc.edu; avestimehr@ee.usc.edu).}
    \thanks{A part of this work was presented in IEEE International Symposium on Information Theory, 2017 \cite{reisizadeh2017coded}.}
   }



\maketitle

\begin{abstract} 
In large-scale distributed computing clusters, such as Amazon EC2,  there are several types of ``system noise'' that can result in major degradation of performance: system failures, bottlenecks due to limited communication bandwidth, latency due to straggler nodes, etc. There have been recent results that demonstrate the impact of coding for efficient utilization of computation and storage redundancy to alleviate the effect of stragglers and communication bottlenecks in \emph{homogeneous} clusters. In this paper, we focus on general \emph{heterogeneous} distributed computing clusters consisting of a variety of computing machines with different capabilities. We propose a coding framework for speeding up distributed computing in heterogeneous clusters by trading redundancy for reducing the latency of computation. In particular, we propose  \emph{Heterogeneous Coded Matrix Multiplication (HCMM)} algorithm for performing distributed matrix multiplication over heterogeneous clusters that is provably asymptotically optimal for a broad class of processing time distributions. Moreover, we show that HCMM is unboundedly faster than \textit{any uncoded} scheme that partitions the total work load among the workers. To demonstrate how the proposed HCMM scheme can be applied in practice, we provide results from numerical studies and Amazon EC2 experiments comparing HCMM with three benchmark load allocation schemes -- Uniform Uncoded, Load-balanced Uncoded, and Uniform Coded. In particular, in our numerical studies, HCMM achieves speedups of up to $73\%$, $56\%$ and $42\%$  respectively over the three benchmark schemes mentioned above. Furthermore, we carry out experiments over Amazon EC2 clusters and demonstrate how HCMM can be combined with rateless codes with nearly linear decoding complexity. In particular, we show that HCMM combined with the Luby transform (LT) codes can significantly reduce the overall execution time. HCMM is found to be up to $61\%$, $46\%$ and $36\%$ faster than the aforementioned three benchmark schemes, respectively. Additionally, we provide a generalization to the problem of optimal load allocation in heterogeneous settings, where we take into account the monetary costs associated with distributed computing clusters. We argue that HCMM is asymptotically optimal for budget-constrained scenarios as well. In particular, we characterize the minimum possible expected cost associated with a computation task over a given cluster of machines. Furthermore, we develop a heuristic algorithm for (HCMM) load allocation for the  distributed implementation of budget-limited computation tasks.

\end{abstract}

\begin{IEEEkeywords}
Coded computation, distributed computing, heterogeneous clusters.
\end{IEEEkeywords}

\IEEEpeerreviewmaketitle

\section{Introduction}
\IEEEPARstart{G}{eneral}  distributed computing frameworks, such as MapReduce~\cite{dean2008mapreduce} and Spark~\cite{zaharia2010spark}, along with the availability of large-scale commodity servers, such as Amazon EC2, have made it possible to carry out large-scale data analytics at the production level. These ``virtualized data centers'' enjoy an abundance of storage space and computing power, and are cheaper to rent by the hour than maintaining dedicated data centers round the year. However, these systems suffer from various forms of ``system noise'' which reduce their efficiency: system failures, limited communication bandwidth, straggler nodes, etc. 

The current state-of-the-art approaches to mitigate the impact of system noise in cloud computing environments involve creation of some form of ``computation redundancy''.  For example, \emph{replicating} the straggling task on another available node is a common approach to deal with stragglers  \cite{zaharia2008improving}, while partial  data replication is also used to reduce the communication load in distributed computing \cite{pu2015low}. However, there have been recent results demonstrating that \emph{coding}  can play a transformational role for creating and exploiting computation redundancy to effectively alleviate the impact of system noise. In particular, there have been two coding concepts proposed to deal with the communication and straggler bottlenecks in distributed computing.

 The first coding concept introduced in~\cite{LMA_all,li2017fundamental,li2017coded} enables an inverse-linear tradeoff between computation load and communication load in distributed computing. This result  implies that increasing the computation load by a factor of $r$ (i.e. evaluating each computation at $r$ carefully chosen nodes) can create novel coding opportunities  that reduce the required communication load for computing by the same factor $r$. Hence, these codes can be utilized to pool the underutilized computing resources at network edge to slash the communication load of Fog computing \cite{bonomi2012fog}. Other related works tackling the communication bottleneck in distributed computation include \cite{lee2015speeding,song2017pliable,kiamari2017heterogeneous,attia2016information,ezzeldin2017communication}.

In the second coding concept introduced in~\cite{lee2015speeding}, an inverse-linear tradeoff between computation load  and computation latency (i.e. the overall job response time) is established for distributed matrix multiplication in homogeneous computing environments. More specifically, this approach utilizes coding to effectively inject redundant computations to alleviate the effects of stragglers and speed up the computations. Hence, by utilizing more computation resources, this can significantly speed up distributed computing applications. A number of related works have been proposed recently to mitigate stragglers in distributed computation. In \cite{dutta2016short}, the authors  propose the use of redundant short dot products to speed up distributed computation of linear transforms. The work in  \cite{tandon2016gradient} proposes coding schemes for mitigating stragglers in distributed batch gradient computation. Coding schemes for high-dimensional matrix-matrix multiplication have been developed in \cite{yu2017polynomial,fahim2017optimal,lee2017high,wang2018fundamental,yu2018straggler}. Techniques for efficient straggler mitigation for matrix-vector computation in distributed wireless settings have been developed in \cite{reisizadeh2017latency}. In \cite{lee2017coded}, the potential of the multicore nature of computing machines is studied. In \cite{ferdinand2016anytime}, the authors propose an anytime approach to distributed computing, developing an approximate matrix multiplication scheme. The authors in \cite{suh2017matrix} propose a novel encoding scheme for achieving large sparsity in the encoded matrix. Work in \cite{aliasgari2017coded} develops a coding strategy for mitigating straggling decoders in cloud radio access network. Speeding up the computation of linear transformations with unreliable components is studied in \cite{yang2017computing}. Straggler mitigation through data encoding in distributed optimization is proposed in \cite{karakus2017straggler}. A coded scheme based on LT codes is proposed in \cite{severinson2017block} for multiplying a matrix by a set of vectors in a distributed computing environment. Addressing stragglers has attracted a lot of attention in the queuing-based frameworks for large-scale computation as well \cite{aktas2017effective,wang2015using}. These works utilize the technique of dynamically replicating the tasks in a careful manner to minimize run-time.

We extend the problem of distributed matrix multiplication in homogeneous clusters in \cite{lee2015speeding} to heterogeneous environments. As discussed in \cite{zaharia2008improving}, the computing environments in virtualized data centers are heterogeneous and algorithms based on homogeneous assumptions can result in significant performance reduction. In this paper, we focus on general \emph{heterogeneous} distributed computing clusters consisting of a variety of computing machines with different capabilities. Specifically, we propose a coding framework for speeding up distributed matrix multiplication in heterogeneous clusters with straggling servers, named \emph{Heterogeneous Coded Matrix Multiplication (HCMM)}.  Matrix multiplication is a crucial computation module in many engineering and scientific disciplines. In particular, it is a fundamental component of many popular machine learning algorithms such as logistic regression, reinforcement learning and gradient descent-based algorithms. Implementations that speed up matrix multiplication would naturally speed up the execution of a wide variety of popular algorithms. Therefore, we envision HCMM to play a fundamental role in speeding up big data analytics in virtualized data centers by leveraging the wide range of computing capabilities provided by these heterogeneous environments. 

We now describe the main ideas behind HCMM, which results in asymptotically optimal performance. In a coded implementation of distributed matrix-vector multiplication, each worker node is assigned the task of computing inner products of the assigned coded rows with the input vector, where the assigned coded rows are random linear combinations of the rows of the original matrix. Computation time at each worker is a random variable, which is first assumed to have shifted exponential distribution, and we later generalize it to shifted Weibull distribution. The master node receives the results from the worker nodes and aggregates them until it receives a decodable set of inner products and recovers the matrix-vector multiplication. We are interested in finding the optimal load allocation that minimizes the expected waiting time to complete this computation. However, due to heterogeneity, finding the exact solution to the optimization problem seems intractable.

As the main contribution of the paper, we propose an alternative optimization that focuses on maximizing the expected number of returned computation results from the workers. Apart from being computationally tractable, the alternative optimization  asymptotically approximates the problem of finding the optimal computation load allocation. Specifically, we develop the HCMM algorithm that is derived as a solution to the alternative formulation, and prove it is asymptotically optimal. Furthermore,  we prove that given a heterogeneous cluster of $n$ workers, HCMM is $\Theta(\log n)$ times faster than uncoded schemes under the shifted exponential distribution for run-time. We further generalize the proposed HCMM algorithm to shifted Weibull model and provide similar unbounded gains over uncoded scenarios. 

In addition to proving the asymptotic optimality of HCMM, we carry out numerical studies and experiments over Amazon EC2 clusters to demonstrate how HCMM can be used in practice. We compare HCMM with three benchmark schemes -- Uniform Uncoded, Load-balanced Uncoded, and Uniform Coded. In our numerical analysis, HCMM results in significant speedups of up to $73\%$, $56\%$ and $42\%$ over the three aforementioned benchmark schemes, respectively. In experiments using Amazon EC2 clusters, we use the Luby transform (LT) codes for coding and demonstrate that HCMM combined with LT codes significantly reduces the overall execution time in comparison to uncoded and coded schemes. In particular, HCMM achieves gains of up to $61\%$, $46\%$ and $36\%$, respectively over Uniform Uncoded, Load-balanced Uncoded and Uniform Coded. Furthermore, the overall computation load of HCMM is less than the one of Uniform Coded. Our results demonstrate that HCMM combines the benefits of both Load-balanced Uncoded and Uniform Coded schemes by achieving efficient load balancing along with minimal number of redundant computations. 

Furthermore, we consider the problem of load allocation under budget constraints, considering an intuitive and convincing pricing model. In particular, we show that HCMM is the (asymptotically) optimal load allocation in feasible budget-constrained scenarios as well, and determine whether a budget-constrained computation task is feasible given a cluster of machines. We then  develop a heuristic algorithm to find the 
(sub)optimal load allocations using the proposed HCMM scheme. 
The heuristic is based on the observation that given a computation task and a set of machines, decreasing the number of fastest machines participating in HCMM results in smaller average cost.


\textbf{Notation.} We denote by $[n]$ the set $\{1,\cdots,n\}$ for any $n\in \mathbb{N}$. For non-negative sequences $g(n)$ and $h(n)$, we denote $g(n)=\mathcal{O}\big(h(n) \big)$ if there exist constants $c > 0$ and $n_0 \in \mathbb{N}$ such that $g(n) \leq c \cdot h(n)$ for all $n > n_0$; and $g(n) = \Theta\big(h(n) \big)$ if $g(n)=\mathcal{O}\big(h(n) \big)$ and $h(n)=\mathcal{O}\big(g(n) \big)$. Moreover, we write $g(n) = o\big(h(n) \big)$ if $\lim_{n \to \infty}g(n)/h(n) = 0$.

\section{Problem Formulation and Main Results}
In this section, we describe our computation model, the network model and the precise problem formulation. We then conclude with four theorems highlighting the main contributions of the paper.
\subsection{Computation Model}
\label{com_model}
We consider the problem of matrix-vector multiplication, in which given a matrix  $\bA \in \mathbb{R}^{r\times m}$ for some positive integers $r$ and $m$, we want to compute the output
$\mathbf{y}=\bA \mathbf{x}$ for an input vector $\mathbf{x}\in \mathbb{R}^m$. Due to limited computing power, the computation cannot be carried out at a single server and a distributed implementation is required. As an example, consider a matrix $\bA$ with an even number of rows and two computing nodes. The matrix can be divided into two  equally tall matrices $\bA_1$ and $\bA_2$, and each will be stored in a different worker node. The master node receives the input $\bx$ and broadcasts it to the two worker nodes. These nodes will then compute $\mathbf{y}_1=\bA_1 \mathbf{x}$ and $\mathbf{y}_2=\bA_2 \mathbf{x}$ locally and return their results to the master node, which combines them to obtain the intended outcome $\mathbf{y}=[\mathbf{y}_1;\mathbf{y}_2]=\bA \mathbf{x}$. 
This example also illustrates an uncoded implementation of distributed computing, in which results from all the worker nodes are required to recover the final result.

We now present the formal definition of \textit{Coded Distributed Computation}.
\begin{definition}
\label{Coded Distributed Computation}\textbf{(Coded Distributed Computation)}
The coded distributed implementation of a computation task $f_\bA(\cdot)$ is specified by:
\begin{itemize}
\item local data blocks $\left<\bA_{i}\right>_{i=1}^n$ and local computation tasks $\left<f_{\bA_i}^i(\cdot)\right>_{i=1}^n$; 

\item a decoding function that outputs $f_\bA(\cdot)$ given the results from a decodable set of local computations.    
\end{itemize}

\end{definition}
For matrix-vector multiplication tasks in particular, local data blocks  $\bA_{i}\in \mathbb{R}^{\ell_i\times m}$ are matrices consisting of \emph{coded} combinations of the rows in $\bA$, for non-negative integers $\ell_i$. To assign the computation tasks to each worker, we use random linear combinations of the $r$ rows of the matrix $\bA$, such that the master node can recover the result $\bA\bx$ from any $r$ inner products received from the worker nodes with probability 1. As an example, if worker $i$ is assigned a matrix-vector multiplication with matrix size $\ell_i \times m$,  it will compute $\ell_i$ inner products of the assigned coded rows of $\bA$ with $\bx$. The master node shall wait for the first $r$ inner products and will use them to decode the required output. In order to ensure the recovery of the output from any $r$ inner products received from the workers, we pick the computation matrix assigned to worker $i$ as $\mathbf{A}_i=\mathbf{S}_i\mathbf{A}$, where $\mathbf{S}_i\in \mathbb{R} ^{\ell_i\times r}$ is the coding matrix with i.i.d. $\mathcal{N}(0,1)$ entries. Worker $i$ computes $\mathbf{A}_i\mathbf{x}$ and returns the result to the master node. Upon receiving $r$ inner products, the aggregated results at the master will be in the form of $\mathbf{z}=\mathbf{S}_{(r)}\mathbf{A}\mathbf{x}$,  where $\mathbf{S}_{(r)}\in \mathbb{R}^{r\times r}$ is the aggregated coding matrix, and it is full-rank with probability $1$ \cite{rudelson2010non}. Therefore, the master node can recover $\mathbf{A}\mathbf{x}=\mathbf{S}_{(r)}^{-1}\mathbf{z}$ with probability $1$.\footnote{Although we consider random linear coding in our theoretical analysis, other codes such as Maximum-Distance Separable (MDS) codes and Luby transform (LT) codes are compatible with HCMM as well, given a decodable set of results at the master. For example, in the MDS case, the entries in the coding matrix $\{\mathbf{S}_i\}_{i=1}^n$ are drawn from a finite field.  Specifically, one can encode the rows of $\bA$ using an $(\sum_{i=1}^n\ell_i,r)$ MDS code and assign $\ell_i$ coded rows to the worker node $i$. The output $\mathbf{Ax}$ can be recovered from the inner products of any $r$ coded rows with the input vector $\bx$. Furthermore, to implement the ideas developed in this work, we use LT codes in our experiments over Amazon EC2 clusters.}\textsuperscript{,}\footnote{Instead of i.i.d.  Gaussian, we could use any continuous distribution for the random entries, since Schwartz-Zippel lemma ensures that such random matrix is full-rank with high probability}
\subsection{Network Model} \label{secII-B}
The network model is based on a master-worker setup illustrated in Fig.  \ref{fig_network}. The master node receives an input $\mathbf{x}$ and broadcasts it to all the workers. Each worker computes its assigned set of computations and unicasts the result to the master node. The master node aggregates the results from the worker nodes until it receives a decodable set of computations and recovers the output $\bA \mathbf{x}$.

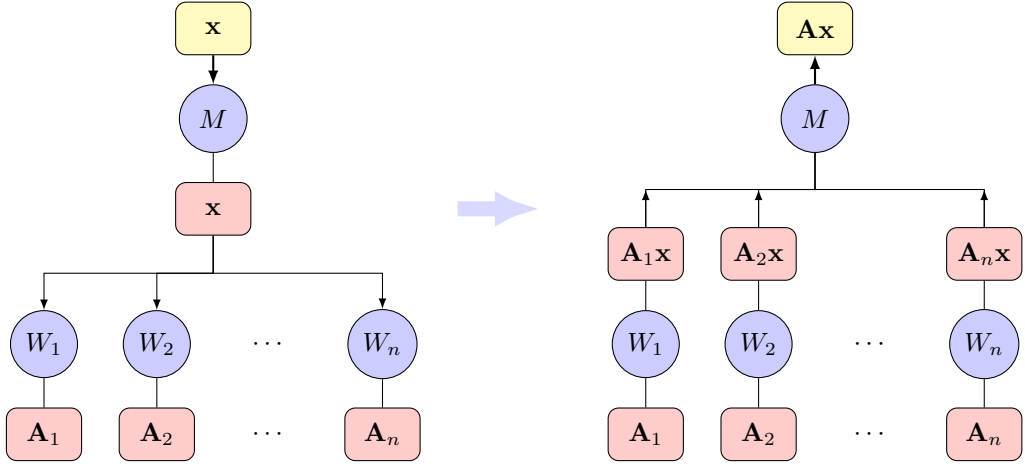
\begin{figure}[h]
\begin{center}
\begin{tikzpicture}[scale=0.68, every node/.style={transform shape}]

\node[fill=red!20,draw,rounded corners, minimum width=1cm,minimum height = .7cm] (A1) at (0,0) {${\bf{A}}_1$};

\node[fill=red!20,draw,rounded corners, minimum width=1cm,minimum height = .7cm] (A2) at (1.5,0) {${\bf{A}}_2$};

\node[fill=red!20,draw,rounded corners, minimum width=1cm,minimum height = .7cm] (An) at (4.5,0) {${\bf{A}}_n$};

\node at (3,0) {$\cdots$};

\node[fill=blue!20,draw,circle,minimum size = .7cm]  (W1) at (0,1.2) {$\mathsf{W}_1$};

\node[fill=blue!20,draw,circle,minimum size = .7cm] (W2) at (1.5,1.2) {$\mathsf{W}_2$};

\node[fill=blue!20,draw,circle,minimum size = .7cm] (Wn) at (4.5,1.2) {$\mathsf{W}_n$};

\node at (3,1.2) {$\cdots$};

\draw  (A1) to (W1);
\draw  (A2) to (W2);
\draw  (An) to (Wn);

\node[fill=red!20,draw,rounded corners, minimum width=1cm,minimum height = .7cm] (x1) at (2.25,3) {$ {\bf{x}}$};

\draw [->,arrows={-latex}]
($ (x1.south) $)
-- ++(0,-.5)
-| ($ (W1.north) $);
\draw [->,arrows={-latex}]
($ (x1.south) $)
-- ++(0,-.5)
-| ($ (W2.north) $);

\draw [->,arrows={-latex}]
($ (x1.south) $)
-- ++(0,-.5)
-| ($ (Wn.north) $);

\node[fill=blue!20,draw,circle,minimum size = .9cm]  (M1) at (2.25,4.2) {$\mathsf{M}$};

\draw  (M1) to (x1);

\node[fill=yellow!30,draw,rounded corners, minimum width=1cm,minimum height = .7cm] (x11) at (2.25,5.4) {$ {\bf{x}}$};

\draw[->,arrows={-latex},thick]  (x11) to (M1);

\node[fill=red!20,draw,rounded corners, minimum width=1cm,minimum height = .7cm] (A12) at (7,0) {${\bf{A}}_1$};

\node[fill=red!20,draw,rounded corners, minimum width=1cm,minimum height = .7cm] (A22) at (8.5,0) {${\bf{A}}_2$};

\node[fill=red!20,draw,rounded corners, minimum width=1cm,minimum height = .7cm] (An2) at (11.5,0) {${\bf{A}}_n$};

\node at (10,0) {$\cdots$};

\node[fill=blue!20,draw,circle,minimum size = .7cm]  (W12) at (7,1.2) {$\mathsf{W}_1$};

\node[fill=blue!20,draw,circle,minimum size = .7cm] (W22) at (8.5,1.2) {$\mathsf{W}_2$};

\node[fill=blue!20,draw,circle,minimum size = .7cm] (Wn2) at (11.5,1.2) {$\mathsf{W}_n$};

\node at (10,1.2) {$\cdots$};

\draw  (A12) to (W12);
\draw  (A22) to (W22);
\draw  (An2) to (Wn2);

\node[fill=red!20,draw,rounded corners, minimum width=1cm,minimum height = .7cm] (A1x) at (7,2.4) {${\bf{A}}_1 {\bf{x}}$};

\node[fill=red!20,draw,rounded corners, minimum width=1cm,minimum height = .7cm] (A2x) at (8.5,2.4) {${\bf{A}}_2 {\bf{x}}$};

\node[fill=red!20,draw,rounded corners, minimum width=1cm,minimum height = .7cm] (Anx) at (11.5,2.4) {${\bf{A}}_n {\bf{x}}$};

\draw  (A1x) to (W12);
\draw  (A2x) to (W22);
\draw  (Anx) to (Wn2);

\node[fill=blue!20,draw,circle,minimum size = .9cm]  (M2) at (9.25,4.2) {$\mathsf{M}$};

\node[fill=yellow!30,draw,rounded corners, minimum width=1cm,minimum height = .7cm] (Ax) at (9.25,5.4) {$ {\bf{Ax}}$};

\draw 
($ (A1x.north) $)
-- ++(0,.5)
-| ($ (M2.south) $);
\draw 
($ (A2x.north) $)
-- ++(0,.5)
-| ($ (M2.south) $);
\draw 
($ (Anx.north) $)
-- ++(0,.5)
-| ($ (M2.south) $);

\draw[->,arrows={-latex}] (A1x.north) -- ++ (0,.5);
\draw[->,arrows={-latex}] (A2x.north) -- ++ (0,.5);
\draw[->,arrows={-latex}] (Anx.north) -- ++ (0,.5);

\draw[->,,arrows={-latex}, thick] (M2.north) to (Ax.south);

\draw [>={LaTeX[width=4mm,length=3mm]},->,color=black!25,line width=2mm] (5cm,4cm) to (6cm,4cm);

\end{tikzpicture}
\end{center}
\caption{Master-worker setup of the computing clusters: The master node receives the input vector $\bx$ and broadcasts it to all the worker nodes. Upon receiving the input, worker node $i$ starts computing the inner products of the input vector with the locally assigned rows, i.e., $\mathbf{y}_i=\mathbf{A}_i \mathbf{x}$, and unicasts the output vector $\mathbf{y}_i$ to the master node upon completing the computation. The results are aggregated at the master node until $r$ inner products are received and the desired output $\mathbf{Ax}$ is recovered. }
\label{fig_network}
\end{figure}

We denote by $T_i$ the random variable representing the task run-time at node $i$ and assume that the run-times $T_1,\cdots,T_n$ are mutually independent. We consider the distribution of run-time random variables to be exponential, and later generalize it to Weibull distribution. More specifically, we consider a 2-parameter shifted exponential distribution for the execution time of each worker, i.e., the CDF of execution time of worker node $i$, $T_i$, loaded with $\ell_i$ row vectors is as follows:
\begin{equation}
\label{eq:1}
\Prob[T_i\leq t]=1-e^{-\frac{\mu_i}{\ell_i}(t-a_i \ell_i)},
\end{equation}
for $t\geq a_i \ell_i$ and $i \in [n]$, where $a_i>0$ is the shift parameter and $\mu_i>0$ denotes the straggling parameter associated with worker node $i$. The shifted exponential model for computation time, which is the sum of a constant (deterministic) term and a variable (stochastic) term, is motivated by the distribution model proposed by authors in \cite{liang2014tofec} for latency in querying data files from cloud storage systems. As demonstrated in \cite{lee2015speeding} as well as by our own experiments, exponential model provides a good fit for the distribution of computation times over cloud computing environments such as Amazon EC2 clusters. Moreover, these experiments confirm the assumption that as a first order approximation, both shift and mean parameters of the shifted exponential distributions linearly scale with the load size.

We further generalize the analysis to shifted Weibull distribution in Section \ref{sec:weibull}, where we consider a 3-parameter shifted Weibull distribution for the execution time of each worker. That is, the CDF of task run-time at worker node $i$, loaded with $\ell_i$ row vectors is as follows:
\begin{equation}
\label{eq:Wdist}
\Prob[ T_i \leq t] = 1-e^{-\left(\frac{\mu_i}{\ell_i}(t-a_i \ell_i)\right)^{\alpha_i}}, 
\end{equation}
for $t\geq a_i \ell_i$ and $i \in [n]$, where $a_i>0$ denotes the shift parameter, $\mu_i>0$ is the straggling parameter and $\alpha_i > 0$ represents the shape parameter associated with worker $i$. A similar model has been considered in \cite{dutta2017coded} as well.

\subsection{Problem Formulation} \label{Problem Formulation}
We consider the problem of using a cluster of $n$ worker nodes for distributedly computing the matrix-vector  multiplication $\bA{\bx}$, where $\bA$ is a size $r\times m$ matrix for positive integers $r$ and $m$. Let $\Bell=(\ell_1,\cdots,\ell_n)$ be the load allocation vector where $\ell_i$ denotes the number of rows assigned to worker node $i$. Let $T_\mathsf{CMP}$ be the random variable denoting the waiting time for receiving a decodable set of results, i.e. at least $r$ inner products. We aim at finding the optimal load allocation vector that minimizes the average waiting time by solving the following optimization problem:
\begin{equation}
\label{eq:2}
\begin{aligned}
\mathcal{P}_{\textnormal{main}}:& & \underset{{\Bell}}{\text{minimize}}
& & \Expc[T_\mathsf{CMP}].
\end{aligned}
\end{equation}
For a homogeneous cluster, to achieve a coded solution, one can divide $\mathbf{A}$ into $k$ equal size submatrices, and apply an $(n,k)$ MDS code to these submatrices. The master node can then obtain the final result from any $k$ responses. In  \cite{lee2015speeding}, the authors find the optimal $k$ for minimizing the average running time for the shifted exponential run-time model.

For heterogeneous clusters, however, assigning equal loads to servers is clearly not optimal. Moreover, directly finding the optimal solution to $\mathcal{P}_{\textnormal{main}}$ is hard. In homogeneous clusters, the problem of finding a sufficient number of inner products can be mapped to the problem of finding the waiting time for a set of fastest responses, and thus closed form expressions for the expected computation time can be found using order statistics of i.i.d. run-times. However, this is not straight-forward in heterogeneous clusters, where the load allocation is non-uniform. In Section \ref{HCMM}, we present an alternative formulation to $\mathcal{P}_{\textnormal{main}}$ in (\ref{eq:2}), and show that the solution to the alternative formulation -- which we shall name HCMM -- is tractable and provably asymptotically optimal. 

\textbf{Assumptions.} From now onward, we consider the practically relevant regime where the size of the problem scales linearly with the size of the network, while the computing power and the storage capacity of each worker node remain constant. Specifically, we assume $r=\Theta{(n)}$, $a_i=\Theta{(1)}$, $\mu_i=\Theta{(1)}$ and $\alpha_i=\Theta{(1)}$ for each worker $i$. 

\subsection{Main Results}
Having set the model and formulation of the problem, we now present the main contributions of this paper. The following theorem characterizes the asymptotic optimality of HCMM for the shifted exponential run-time model.

\begin{theorem}
\label{theorem1}
Let $T_{\mathsf{HCMM}}$ be the random variable denoting the finish time of the HCMM algorithm and $T_{\mathsf{OPT}}$ be the random variable representing the finish time of the optimum algorithm obtained by solving $\mathcal{P}_{\textnormal{main}}$. Then, for shifted exponential run-times in (\ref{eq:1}) with constant parameters $a_i=\Theta{(1)}$ and $\mu_i=\Theta{(1)}$ for each worker $i \in [n]$ and $r=\Theta{(n)}$, we have $\lim_{n \to \infty} \Expc[T_{\mathsf{HCMM}}] =\lim_{n \to \infty} \Expc[T_{\mathsf{OPT}}]$.
\end{theorem}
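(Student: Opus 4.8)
The plan is to squeeze both $\Expc[T_{\mathsf{HCMM}}]$ and $\Expc[T_{\mathsf{OPT}}]$ against a common deterministic reference time $t^*$ and show each equals $t^*(1+o(1))$. Since HCMM is a feasible allocation for $\mathcal{P}_{\textnormal{main}}$, we automatically have $\Expc[T_{\mathsf{OPT}}] \le \Expc[T_{\mathsf{HCMM}}]$, so it suffices to prove a matching lower bound $\Expc[T_{\mathsf{OPT}}] \ge t^*(1-o(1))$ and an upper bound $\Expc[T_{\mathsf{HCMM}}] \le t^*(1+o(1))$. The natural reference is built from the aggregate return process: writing $X_{\Bell}(t) = \sum_{i=1}^n \ell_i \mathbbm{1}[T_i \le t]$ for the number of inner products collected by time $t$ under allocation $\Bell$, the completion time is $T_{\mathsf{CMP}} = \inf\{t : X_{\Bell}(t) \ge r\}$. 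Let $\Lambda(t) = \max_{\Bell} \Expc[X_{\Bell}(t)]$ be the largest achievable expected return at deadline $t$ (this is exactly what the alternative HCMM optimization maximizes), and define $t^*$ by $\Lambda(t^*) = r$.

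First I would establish the structural backbone: for fixed $t$, maximizing $\ell\big(1 - e^{-(\mu/\ell)(t - a\ell)}\big)$ over $\ell$ yields a first-order condition $e^{u} = 1 + u + \mu a$ in the variable $u = (\mu/\ell)(t - a\ell)$, whose root $u^*$ depends only on the constant $\mu a$. Consequently the optimal per-worker load $\ell_i^*(t) = \mu_i t / (u_i^* + \mu_i a_i)$ is \emph{linear} in $t$, and its contribution to the expected return is $c_i t$ with $c_i = \mu_i(1-e^{-u_i^*})/(u_i^* + \mu_i a_i) = \Theta(1)$. Summing, $\Lambda(t) = C_n t$ with $C_n = \sum_i c_i = \Theta(n)$, so $t^* = r/C_n = \Theta(1)$ and every HCMM load is $\ell_i^*(t^*) = \Theta(1)$. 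This exact linearity is what makes the two bounds clean.

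For the lower bound on $\Expc[T_{\mathsf{OPT}}]$, fix a small $\epsilon > 0$ and evaluate at $t = t^*(1-\epsilon)$. For an arbitrary allocation $\Bell$ (in particular the optimal one), $\Expc[X_{\Bell}(t)] \le \Lambda(t) = r(1-\epsilon)$. I would then apply a concentration inequality (Hoeffding or Bernstein) to the independent, bounded summands $\ell_i \mathbbm{1}[T_i \le t]$; the crucial observation is that any worker with $\ell_i > t/a_i$ returns nothing by time $t$, so the effective summands are bounded by $t/a_i = \Theta(1)$ and $\sum_i \ell_i^2 \mathbbm{1}[\ell_i \le t/a_i] = \mathcal{O}(n)$. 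With a deficit $r - \Expc[X_{\Bell}(t)] = \epsilon r = \Theta(n)$ against a variance proxy of order $\Theta(n)$, we get $\Pr[X_{\Bell}(t) \ge r] \le e^{-\Theta(\epsilon^2 n)}$, hence $\Pr[T_{\mathsf{OPT}} \le t^*(1-\epsilon)] \le e^{-\Theta(\epsilon^2 n)}$ and $\Expc[T_{\mathsf{OPT}}] \ge t^*(1-\epsilon)\big(1 - e^{-\Theta(\epsilon^2 n)}\big)$; letting $n \to \infty$ and then $\epsilon \to 0$ gives $\liminf \Expc[T_{\mathsf{OPT}}]/t^* \ge 1$.

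For the upper bound on HCMM, I would have the algorithm target a deadline $t_{\mathsf{HCMM}}$ with a vanishing margin, $\Lambda(t_{\mathsf{HCMM}}) = r(1+\epsilon)$ so $t_{\mathsf{HCMM}} = t^*(1+\epsilon)$. The same concentration bound applied in the other direction gives $\Pr[X_{\mathsf{HCMM}}(t_{\mathsf{HCMM}}) < r] \le e^{-\Theta(\epsilon^2 n)}$, so HCMM completes by $t_{\mathsf{HCMM}}$ except on an exponentially rare event. On that event I would bound $T_{\mathsf{HCMM}}$ crudely by $\max_i T_i$, whose expectation and second moment under shifted exponentials with $\Theta(1)$ parameters and $\Theta(1)$ loads are only $\mathcal{O}(\log n)$; Cauchy--Schwarz then makes the bad-event contribution $e^{-\Theta(\epsilon^2 n)} \cdot \mathcal{O}(\log n) = o(1)$. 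Thus $\Expc[T_{\mathsf{HCMM}}] \le t^*(1+\epsilon) + o(1)$, giving $\limsup \Expc[T_{\mathsf{HCMM}}]/t^* \le 1$. Chaining with the trivial inequality and the lower bound forces $\Expc[T_{\mathsf{OPT}}]/t^* \to 1$ and $\Expc[T_{\mathsf{HCMM}}]/t^* \to 1$; since $t^* = \Theta(1)$, the normalized gap vanishing yields $\Expc[T_{\mathsf{HCMM}}] - \Expc[T_{\mathsf{OPT}}] \to 0$ and hence equality of the two limits. I expect the main obstacle to be the lower bound on $\Expc[T_{\mathsf{OPT}}]$: since the optimal allocation is unknown and could a priori be wildly unbalanced, the argument hinges on the observation that at a $\Theta(1)$ deadline only $\mathcal{O}(1)$-sized loads can contribute, which is what tames the concentration; a secondary technical point is verifying that the rare-event tail in the HCMM upper bound is genuinely negligible rather than merely small.
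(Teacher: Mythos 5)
Your proposal is correct and follows the same high-level strategy as the paper: both squeeze $\Expc[T_{\mathsf{OPT}}]$ and $\Expc[T_{\mathsf{HCMM}}]$ against the deterministic reference time at which the maximal expected aggregate return equals $r$ (your $t^*$ is the paper's $\tau^*=r/s$), both exploit the linearity $\ell_i^*(t)=t/\lambda_i$ of the per-worker optimizer, and both rely on concentration for weighted sums of independent indicators (your Hoeffding bound and the paper's McDiarmid bound coincide in this setting). The execution of the two main bounds, however, is genuinely different, and in the lower bound your route is the stronger one. The paper's lower bound requires the structural claim $\ell_{\mathsf{OPT},i}=\Theta(1)$ for the unknown optimal allocation --- justified only in a footnote by an informal exchange argument --- because both its Taylor expansion of the deficit and the denominator $\sum_i \ell_{\mathsf{OPT},i}^2$ in McDiarmid's inequality need it; it then chains through an intermediate time $\tau$ solving $\Expc[X_{\mathsf{OPT}}(\tau)]=r$. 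Your observation that any worker with $\ell_i> t/a_i$ deterministically returns nothing by time $t$ (because of the shift $a_i\ell_i$) makes the concentration bound uniform over \emph{all} allocations with no structural claim about the optimum, and comparing $\Expc[X_{\Bell}(t)]$ directly against the linear envelope $\Lambda(t)$ removes both the intermediate $\tau$ and the Taylor step; this patches what is arguably the least rigorous point of the paper's argument. In the upper bound, the paper conditions on the event $\{T_{\mathsf{max}}>\Theta(n)\}$ and evaluates explicit tail integrals of the maximum's density, while you bound the rare-event contribution by Cauchy--Schwarz against $\Expc\left[\left(\max_i T_i\right)^2\right]=\mathcal{O}(\log^2 n)$; both are valid, yours is somewhat slicker.

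One step in your upper bound needs tightening: the deadline $t^*(1+\epsilon)$ is part of the algorithm's definition, so with $\epsilon$ fixed you have a different allocation for each $\epsilon$ and only obtain $\limsup_n \Expc[T_{\mathsf{HCMM}(\epsilon)}]/t^*\leq 1+\epsilon$ for that family --- you cannot then send $\epsilon\to 0$ for a single algorithm. The fix is to let the margin vanish with $n$, e.g. $\epsilon_n=\Theta\left(\frac{\log n}{\sqrt{n}}\right)$, so that the failure probability $e^{-\Theta(\epsilon_n^2 n)}=e^{-\Theta(\log^2 n)}$ still absorbs the $\mathcal{O}(\log n)$ maximum via Cauchy--Schwarz and the deadline converges to $t^*$; this is precisely the role of $\delta$ in the paper's Lemma \ref{lemma1}, and with that substitution your argument is complete.
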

\begin{remark}
Theorem \ref{theorem1} demonstrates that our proposed HCMM algorithm is asymptotically optimal as the number of workers $n$ approaches infinity. In other words, the optimal computation load allocation problem $\mathcal{P}_{\textnormal{main}}$ in (\ref{eq:2}) can be optimally solved using the proposed HCMM algorithm as $n$ gets large. 
\end{remark}
\begin{remark}
We note that $\mathcal{P}_{\textnormal{main}}$ in (\ref{eq:2}) is a hard combinatorial optimization problem since it will require checking all load combinations to minimize the overall expected execution time. The key idea in Theorem \ref{theorem1} is to consider an alternative formulation to (\ref{eq:2}) focusing on maximizing the expected number of returned computation results from the workers, i.e. maximizing the \textit{aggregate return}. As we describe in Section \ref{HCMM}, the alternative optimization problem not only can be solved efficiently in a tractable way giving rise to HCMM algorithm, it also asymptotically approximates $\mathcal{P}_{\textnormal{main}}$ and allows us to establish Theorem \ref{theorem1}. 
\end{remark}

\begin{remark}
While Theorem \ref{theorem1} theoretically characterizes the optimality of our proposed scheme HCMM, we also demonstrate gains that one can get in practice. In particular, we carry out numerical studies and experiments over Amazon EC2 clusters that demonstrate that HCMM can provide significant gains in a wide variety of computing scenarios. In particular, we compare HCMM's performance with three benchmark load allocation policies -- Uniform Uncoded, Load-balanced Uncoded, and Uniform Coded. In numerical studies, HCMM achieves speedups of up to $71\%$ over Uniform Uncoded, up to $53\%$ over Load-balanced Uncoded, and up to $39\%$ over Uniform Coded. In EC2 experiments, HCMM combined with the Luby transform (LT) codes  provides speedups of up to $61\%$, $46\%$ and $36\%$  over Uniform Uncoded, Load-balanced Uncoded and Uniform Coded, respectively.
\end{remark}


\begin{theorem}
\label{theorem2}
Let $T_{\mathsf{UC}}$ denote the completion time of the uncoded distributed matrix multiplication algorithm. Then, for the shifted exponential run-times with constant parameters and $r=\Theta{(n)}$,
\begin{equation}
\frac{\Expc[T_{\mathsf{UC}}]}{\Expc[T_{\mathsf{HCMM}}] }=\Theta\big(\log n\big). \nonumber
\end{equation}
\end{theorem}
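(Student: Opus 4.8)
The plan is to establish the two factors of the ratio separately: first that $\Expc[T_{\mathsf{HCMM}}] = \Theta(1)$, and then that $\Expc[T_{\mathsf{UC}}] = \Theta(\log n)$; dividing the two yields the claimed $\Theta(\log n)$ behavior. The constant order for HCMM should already fall out of the machinery used to prove Theorem~\ref{theorem1}: under $r = \Theta(n)$, $a_i = \Theta(1)$, $\mu_i = \Theta(1)$, the HCMM load on each worker is $\ell_i = \Theta(1)$, and substituting an ansatz $\ell_i = c_i t$ into the expected aggregate return $\sum_i \ell_i \Prob[T_i \le t]$ makes the exponent $\frac{\mu_i}{\ell_i}(t - a_i \ell_i)$ independent of $t$, so each worker contributes $\Theta(t)$ results and the deterministic time $t^\star$ at which the expected return reaches $r = \Theta(n)$ is $\Theta(1)$. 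Invoking the same concentration of the number of returned inner products around its mean that underlies Theorem~\ref{theorem1} then gives $\Expc[T_{\mathsf{HCMM}}] = \Theta(1)$.

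The remaining and more delicate work is to pin down $\Expc[T_{\mathsf{UC}}] = \Theta(\log n)$. Here the uncoded completion time is $T_{\mathsf{UC}} = \max_{i} T_i$ subject to $\sum_i \ell_i = r$, since every worker's partial product is needed. For the upper bound I would exhibit a single balanced allocation, e.g. $\ell_i \approx r/n = \Theta(1)$, for which every rate $\mu_i/\ell_i$ and every shift $a_i\ell_i$ is $\Theta(1)$; writing $\Expc[\max_i T_i] = \int_0^\infty \Prob[\max_i T_i > t]\,dt$ and bounding the integrand by $\min\{1,\, n\, e^{-\rho_{\min}(t-a_{\max})}\}$ with $\rho_{\min} = \min_i \mu_i/\ell_i$ and $a_{\max} = \max_i a_i\ell_i$, then splitting the integral at $t = a_{\max} + (\ln n)/\rho_{\min}$, gives the standard max-of-$n$-exponentials estimate $O(\log n)$. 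Since the optimal uncoded allocation can only do better, $\Expc[T_{\mathsf{UC}}] = O(\log n)$.

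The crux is the matching lower bound $\Expc[T_{\mathsf{UC}}] = \Omega(\log n)$, which must hold for \emph{every} feasible allocation because the uncoded scheme is free to optimize $\Bell$. I would split on the largest load. If some worker has $a_i \ell_i \ge c \log n$, then its deterministic shift alone forces $\max_i T_i \ge c\log n$ surely, and we are done. Otherwise all $\ell_i = O(\log n)$, and I would lower-bound $\Prob[\max_i T_i > c\log n]$ through the sum $\sum_i \Prob[T_i > c\log n] = \sum_i e^{\mu_i a_i}\, n^{-\mu_i c/\ell_i}$: choosing $c$ small enough, relative to $\mu_{\max}$ and the constant in $r = \Theta(n)$, that the threshold $L_0 = 2c\mu_{\max}$ lies below the average load, a counting argument using $\sum_i \ell_i = \Theta(n)$ and $\ell_i = O(\log n)$ shows the number of workers with $\ell_i \ge L_0$ is $\Omega(n/\log n)$, each contributing at least $n^{-1/2}$ to the sum. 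Hence $\sum_i \Prob[T_i > c\log n] \to \infty$, which forces $\Prob[\max_i T_i > c\log n] \to 1$ and thus $\Expc[T_{\mathsf{UC}}] \ge \tfrac{c}{2}\log n$.

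The main obstacle I anticipate is exactly this uniform-over-allocations lower bound: one must control the tension between large loads, which blow up the deterministic shift $a_i\ell_i$, and small loads, which give light exponential tails but force many active workers, and the threshold constant $c$ has to be tuned simultaneously against $\mu_{\max}$ and the ratio $r/n$ so that neither regime escapes the $\Omega(\log n)$ bound. Once both sides are secured, combining $\Expc[T_{\mathsf{UC}}] = \Theta(\log n)$ with $\Expc[T_{\mathsf{HCMM}}] = \Theta(1)$ delivers the theorem.
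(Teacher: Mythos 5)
Your proposal is correct, and its overall architecture -- establishing $\Expc[T_{\mathsf{HCMM}}]=\Theta(1)$ and $\Expc[T_{\mathsf{UC}}]=\Theta(\log n)$ separately and taking the ratio -- is the same as the paper's; the HCMM part and the $O(\log n)$ upper bound (uniform allocation plus the max-of-$n$-exponentials estimate) also match the paper in substance. Where you genuinely diverge is on the crux, the lower bound $\Expc[T_{\mathsf{UC}}]=\Omega(\log n)$ over \emph{all} feasible allocations. The paper handles this by replacing every machine with a stochastically faster one with parameters $\tilde{a}=\min_i a_i$ and $\tilde{\mu}=\max_i \mu_i$, asserting (without proof) that the optimal allocation for the resulting i.i.d.\ cluster is uniform, and then computing the expected maximum exactly as $\frac{\tilde{a}r}{n}+\frac{rH_n}{n\tilde{\mu}}$ via the harmonic series. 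You instead argue directly for an arbitrary allocation: either some $a_i\ell_i\geq c\log n$, so the deterministic shift alone gives the bound, or every load is $O(\log n)$, in which case your counting argument (total load $r=\Theta(n)$ with loads capped at $O(\log n)$ yields $\Omega(n/\log n)$ workers with load at least $L_0=2c\mu_{\max}$, each having tail probability at least $n^{-1/2}$ at time $c\log n$) combined with independence, via $\Prob\left[\max_i T_i\leq c\log n\right]\leq \exp\left(-\sum_i \Prob[T_i> c\log n]\right)\to 0$, forces $\Prob\left[\max_i T_i> c\log n\right]\to 1$; all steps here check out. Your route is longer but more self-contained: it bypasses both the stochastic-domination reduction and the ``uniform is optimal for i.i.d.\ machines'' claim, which the paper invokes but never proves. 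What the paper's route buys is brevity and explicit constants through $H_n$; what yours buys is a lower bound that holds verbatim for every allocation without any optimality claim about a comparison cluster.
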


\begin{remark}
As Theorem \ref{theorem2} shows, our proposed HCMM guarantees an improvement of $\Theta\big(\log n\big)$ in expected execution time over \emph{any} uncoded scheme, including the one that optimally allocates the workers' loads. This result illustrates that by leveraging coded computing, one achieves the same order-wise gain over heterogeneous clusters as over homogeneous clusters \cite{lee2015speeding}. 
\end{remark}

Although Theorems \ref{theorem1} and \ref{theorem2} are based on the shifted exponential model (\ref{eq:1}) for run-time random variables for the workers, our analyses are general and can be extended to other models. The following two theorems generalize the results when the execution time of each worker follows the Weibull distribution as described in (\ref{eq:Wdist}).
\begin{theorem}
\label{theorem3}
 For the shifted Weibull distribution of run-times with constant parameters $a_i=\Theta{(1)}$, $\mu_i=\Theta{(1)}$ and $\alpha_i=\Theta{(1)}$ for each worker $i \in [n]$ and $r=\Theta{(n)}$, the proposed HCMM algorithm is asymptotically optimal, i.e., $\lim_{n \to \infty} \Expc[T_{\mathsf{HCMM}}] =\lim_{n \to \infty} \Expc[T_{\mathsf{OPT}}]$.
\end{theorem}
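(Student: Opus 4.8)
The plan is to follow the same two-step strategy used to establish Theorem~\ref{theorem1} for the exponential case, and to isolate precisely where the Weibull shape parameter forces additional care. First I would set up the alternative optimization that HCMM solves: for a target deadline $t$, worker $i$ returns all $\ell_i$ of its assigned inner products at time $T_i$, so the \emph{aggregate return} is $X(t)=\sum_{i=1}^n \ell_i \mathbbm{1}[T_i\le t]$ with mean $\Expc[X(t)]=\sum_{i=1}^n \ell_i \Prob[T_i\le t]$. HCMM picks the loads maximizing this mean and then the smallest $t$ for which the maximized mean reaches $r$.

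The crucial structural fact — which I would verify first, and which makes the Weibull analysis parallel the exponential one — is a homogeneity property. Writing $g_i(\ell,t)=\ell\bigl(1-e^{-(\frac{\mu_i}{\ell}(t-a_i\ell))^{\alpha_i}}\bigr)$ and substituting $\ell=t\lambda$, the exponent becomes $(\frac{\mu_i}{\lambda}(1-a_i\lambda))^{\alpha_i}$, which is free of $t$; hence $g_i(\ell,t)=t\,\phi_i(\ell/t)$ for a function $\phi_i$ depending only on $(a_i,\mu_i,\alpha_i)$. Maximizing $\phi_i$ yields a $t$-independent optimal ratio $\lambda_i^\star$, so the HCMM load is $\ell_i^\star=\lambda_i^\star t$ and the maximized mean is linear, $\Expc[X(t)]=t\sum_i \phi_i(\lambda_i^\star)$. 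Since the assumptions give $\sum_i\phi_i(\lambda_i^\star)=\Theta(n)$ and $r=\Theta(n)$, the optimal deadline is $t^\star=\Theta(1)$ and every load $\ell_i^\star=\Theta(1)$ — exactly the regime of Theorem~\ref{theorem1}.

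With this in hand the remaining steps mirror the exponential proof. Because the summands $\ell_i^\star\mathbbm{1}[T_i\le t]$ are independent and bounded by $\ell_i^\star=\Theta(1)$, a Bernstein/Bennett bound gives $\Prob[X(t)<r]\le e^{-\Theta(n\epsilon^2)}$ whenever $t\ge(1+\epsilon)t^\star$: there $\Expc[X(t)]-r=\Theta(n\epsilon)$ (monotonicity of the CDF with $\Theta(1)$ density at the bulk point $t^\star$), while the variance is $\Theta(n)$. Using $\{T_{\mathsf{HCMM}}>t\}=\{X(t)<r\}$, I would then upper bound $\Expc[T_{\mathsf{HCMM}}]=\int_0^\infty\Prob[X(t)<r]\,dt$ and show it converges to $t^\star$. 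For the matching lower bound, the optimality of the alternative objective — no allocation can make $\Expc[X(t)]$ exceed $t\sum_i\phi_i(\lambda_i^\star)$ — forces any scheme, including the optimum, to wait until essentially $t^\star$ to accumulate $r$ results in expectation, and a short concentration argument upgrades this to $\Expc[T_{\mathsf{OPT}}]\ge t^\star-o(1)$. Combining the bounds yields $\lim_n\Expc[T_{\mathsf{HCMM}}]=\lim_n\Expc[T_{\mathsf{OPT}}]$.

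The step I expect to be the genuine obstacle, and the only place the shape parameter matters, is controlling the upper tail of $\int_0^\infty\Prob[X(t)<r]\,dt$ in the heavy-tailed regime $\alpha_i<1$, where a single worker's survival $\Prob[T_i>t]=e^{-(\frac{\mu_i}{\ell_i^\star}(t-a_i\ell_i^\star))^{\alpha_i}}$ decays only sub-exponentially. Concentration handles the window $t\in[t^\star,(1+\epsilon)t^\star]$, but for large $t$ I would instead bound $\Prob[X(t)<r]$ by the probability that the still-unfinished workers carry load exceeding the built-in redundancy $\sum_i\ell_i^\star-r$, and integrate the per-worker survival functions directly. Finiteness of each worker's mean run-time, which equals $a_i\ell_i^\star+\frac{\ell_i^\star}{\mu_i}\Gamma(1+1/\alpha_i)=\Theta(1)$ even for $\alpha_i<1$, guarantees these tail integrals are $o(1)$, so the Gamma-function constants simply replace the clean exponential means without altering the order-wise conclusion.
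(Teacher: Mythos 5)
Your proposal is correct and takes essentially the same route as the paper: you re-derive the scale-invariant optimal loads $\ell_i^*(t)=t/\lambda_i$ and the linear maximized mean $\Expc[X(t)]=ts$ with $s=\Theta(n)$ under the Weibull model (hence $\tau^*=r/s=\Theta(1)$ and $\ell_i^*=\Theta(1)$), observe that the concentration step is distribution-free because the summands $\ell_i\mathbbm{1}_{\{T_i\leq t\}}$ are bounded, and isolate the extreme-tail control as the only place the shape parameter enters -- exactly the role of the paper's bound (\ref{eq:th5}). Your technical choices (tail-integral representation of $\Expc[T_{\mathsf{HCMM}}]$, Bernstein instead of McDiarmid, and Markov on the unfinished load rather than stochastic domination of $T_{\mathsf{max}}$ by the maximum of $n$ i.i.d.\ Weibull variables) are interchangeable implementations of the same three-part argument.
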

\begin{theorem}
\label{theorem4}
 Under the Weibull distribution for run-times with constant parameters and $r=\Theta{(n)}$, the proposed HCMM scheme unboundedly outperforms the uncoded scheme, i.e.,
\begin{equation}
\frac{\Expc[T_{\mathsf{UC}}]}{\Expc[T_{\mathsf{HCMM}}] } \geq \Theta\big( (\log n)^{1/\tilde{\alpha}}\big),  \nonumber
\end{equation}
where $\tilde{\alpha} = \max_{i \in [n]} \alpha_i$ is the largest shape parameter among the workers.
\end{theorem}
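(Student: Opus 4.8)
The plan is to bound the numerator and denominator of the ratio separately, reducing the claim to two estimates: an upper bound $\Expc[T_{\mathsf{HCMM}}] = \Theta(1)$ and a matching lower bound $\Expc[T_{\mathsf{UC}}] = \Omega\big((\log n)^{1/\tilde{\alpha}}\big)$ that holds uniformly over \emph{every} uncoded load allocation (in particular the optimal one). Dividing the two estimates then yields the stated bound. Both estimates are the Weibull analogues of the quantities controlling the exponential case in Theorem \ref{theorem2}, and setting $\alpha_i = 1$ recovers the $\Theta(\log n)$ gain there.

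For the denominator, I would invoke the HCMM construction and its analysis (as used for Theorem \ref{theorem3}): HCMM assigns to each worker a load $\ell_i = \Theta(1)$ chosen so that the expected aggregate return reaches $r$ at a target time $t^\star = \Theta(1)$, and concentration of the aggregate return around its mean forces $T_{\mathsf{HCMM}}$ to lie within a constant factor of $t^\star$. Since the shifts $a_i \ell_i$ and scales $\ell_i/\mu_i$ are all $\Theta(1)$ under the stated parameter assumptions, this gives $\Expc[T_{\mathsf{HCMM}}] = \Theta(1)$, independent of $n$.

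The heart of the argument is the lower bound on $\Expc[T_{\mathsf{UC}}]$, where $T_{\mathsf{UC}} = \max_{i\in[n]} T_i$ and the loads satisfy $\sum_i \ell_i = r = \Theta(n)$. First, a pigeonhole observation: at least $n/2$ of the workers carry load $\ell_i \le 2r/n = \Theta(1)$; call this set $S$. For $i \in S$ all parameters $\mu_i/\ell_i$ and $a_i\ell_i$ are $\Theta(1)$, so for $t$ large enough the base $\tfrac{\mu_i}{\ell_i}(t - a_i\ell_i)$ exceeds $1$, and using $\alpha_i \le \tilde{\alpha}$ I can pass to the uniform tail lower bound
\begin{equation}
\Prob[T_i > t] = e^{-\left(\frac{\mu_i}{\ell_i}(t-a_i\ell_i)\right)^{\alpha_i}} \ge e^{-\left(\frac{\mu_i}{\ell_i}(t-a_i\ell_i)\right)^{\tilde{\alpha}}} \ge e^{-(Ct)^{\tilde{\alpha}}}
\end{equation}
for a constant $C$ not depending on $n$. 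Choosing $t = \epsilon (\log n)^{1/\tilde{\alpha}}$ with $\epsilon$ a small constant makes $(Ct)^{\tilde{\alpha}} = (C\epsilon)^{\tilde{\alpha}}\log n$, so each $\Prob[T_i > t] \ge n^{-\delta}$ with $\delta = (C\epsilon)^{\tilde{\alpha}} < 1$. Then, by independence,
\begin{equation}
\Prob[T_{\mathsf{UC}} > t] \ge 1 - \prod_{i \in S}\big(1 - \Prob[T_i > t]\big) \ge 1 - e^{-\sum_{i\in S}\Prob[T_i > t]} \ge 1 - e^{-\frac{1}{2}n^{1-\delta}} \to 1,
\end{equation}
and hence $\Expc[T_{\mathsf{UC}}] \ge t\,\Prob[T_{\mathsf{UC}} > t] = \Omega\big((\log n)^{1/\tilde{\alpha}}\big)$. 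Crucially this holds for every feasible allocation, so it applies to the optimal uncoded scheme as well.

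The main obstacle, and the reason the exponent is governed by $\tilde{\alpha} = \max_i \alpha_i$ rather than $\min_i \alpha_i$, is the tail-comparison step: to obtain a bound valid simultaneously for all (heterogeneous) workers I must replace each $\alpha_i$ by the largest shape parameter, which weakens the per-worker tail to the slowest admissible decay and produces the $(\log n)^{1/\tilde{\alpha}}$ growth. The remaining care is essentially bookkeeping — verifying that the $\Theta(1)$ parameter assumptions make the constant $C$ uniform over $S$ and that $t$ stays in the support region $t \ge a_i \ell_i$ — after which combining the two estimates gives $\Expc[T_{\mathsf{UC}}]/\Expc[T_{\mathsf{HCMM}}] \ge \Theta\big((\log n)^{1/\tilde{\alpha}}\big)$.
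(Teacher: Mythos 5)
Your strategy is genuinely different from the paper's: the paper lower-bounds $\Expc[T_{\mathsf{UC}}]$ by replacing every machine with a stochastically faster identical machine with parameters $(\tilde{a},\tilde{\mu},\tilde{\alpha})$, observing that the optimal allocation for identical machines is uniform, and then applying an extreme-value bound (its Lemma \ref{lemma:maxw}, a Markov-inequality argument) to $n$ i.i.d.\ Weibull variables; you instead attack an arbitrary allocation directly via pigeonhole and per-worker tail bounds. However, as written your proof has a genuine gap in the pigeonhole step. From $\ell_i \leq 2r/n$ for $i \in S$ you conclude that ``$\mu_i/\ell_i$ and $a_i\ell_i$ are $\Theta(1)$,'' but the pigeonhole gives only an \emph{upper} bound on the loads in $S$, not a lower bound. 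A feasible allocation may assign arbitrarily small (or zero) loads to many workers: for instance, $n/2$ workers with load $\approx 2r/n$ and $n/2$ workers with load $n^{-10}$. All of the latter lie in $S$, yet for them $\mu_i/\ell_i = \Theta(n^{10})$, so $\Prob[T_i > t] = e^{-\left(\frac{\mu_i}{\ell_i}(t-a_i\ell_i)\right)^{\alpha_i}}$ is super-polynomially small at any fixed $t$, and no constant $C$ uniform over $S$ makes $\Prob[T_i > t] \geq e^{-(Ct)^{\tilde{\alpha}}}$ true. Your key estimate $\sum_{i\in S}\Prob[T_i>t] \geq \tfrac{1}{2}n^{1-\delta}$ then fails, and with it the claimed uniformity over all allocations (which is exactly what you need to cover the \emph{optimal} uncoded scheme).

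The gap is repairable, but it requires a case split that is currently missing. Fix a constant $c>0$ small enough that $cn \leq r/2$ (possible since $r=\Theta(n)$) and let $H = \{i : \ell_i \geq c\}$; the workers outside $H$ carry total load less than $cn \leq r/2$, so $H$ carries load at least $r/2$. If some $i \in H$ has $\ell_i \geq (\log n)^{1/\tilde{\alpha}}$, the deterministic shift alone gives $T_{\mathsf{UC}} \geq a_i\ell_i = \Theta\big((\log n)^{1/\tilde{\alpha}}\big)$ almost surely and you are done. Otherwise every load in $H$ is at most $(\log n)^{1/\tilde{\alpha}}$, which forces $|H| \geq \frac{r/2}{(\log n)^{1/\tilde{\alpha}}} = \Theta\big(n/(\log n)^{1/\tilde{\alpha}}\big)$; for these workers $\mu_i/\ell_i \leq \mu_i/c = \Theta(1)$, so your tail comparison is valid on $H$, and $\sum_{i\in H}\Prob[T_i>t] = \Omega\big(n^{1-\delta}/(\log n)^{1/\tilde{\alpha}}\big) \to \infty$ still yields $\Prob[T_{\mathsf{UC}}>t]\to 1$ for $t = \epsilon(\log n)^{1/\tilde{\alpha}}$. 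With this trichotomy (huge, moderate, and negligible loads) your argument goes through; the denominator bound $\Expc[T_{\mathsf{HCMM}}]=\Theta(1)$ is fine as stated and matches the paper's use of Theorem \ref{theorem3}.
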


\begin{remark}
As stated in Theorem \ref{theorem4}, HCMM provides an unbounded gain over \emph{any} uncoded scheme -- including the optimal uncoded load allocation -- under the Weibull distribution for workers' run-times. Furthermore, our numerical simulations demonstrate speedups of up to $73\%$, $56\%$ and $42\%$ over Uniform Uncoded, Load-balanced Uncoded and Uniform Coded, respectively.
\end{remark}

In the following section, we describe our alternative formulation based on aggregate return and describe our proposed HCMM algorithm that solves the alternative optimization.

\section{The Proposed HCMM Scheme and Proofs of Theorems \ref{theorem1} and \ref{theorem2}}
\label{HCMM}
In this section, we prove Theorems \ref{theorem1} and \ref{theorem2} for the exponential model (\ref{eq:1}). In particular, we start by describing the HCMM algorithm and show that it asymptotically achieves the optimal performance, as stated in Theorem \ref{theorem1}, and lastly conclude the section by characterizing the gain of HCMM over uncoded scheme.

To derive HCMM, we start by reformulating $\mathcal{P}_{\textnormal{main}}$ defined in (\ref{eq:2}) and show that the alternative formulation can be efficiently solved, as opposed to solving $\mathcal{P}_{\textnormal{main}}$ that needs an exhaustive search over all possible load allocations. The solution to the alternative problem gives rise to HCMM. We will further prove the optimality of HCMM and compare its average run-time to uncoded schemes. 
\subsection{Alternative Formulation of $\mathcal{P}_{\textnormal{main}}$ via Maximal Aggregate Return}

Consider an $n$-tuple load allocation $\Bell=(\ell_1,\cdots,\ell_n)$ and let $t$ be a feasible time for computation, i.e., $t\geq\displaystyle \max_{i}\{a_i\,\ell_i\}$. The number of equations received from worker $i\in [n]$ at the master node till time $t$ is a random variable, $X_i(t)=\ell_i\, \mathbbm{1}_{\{T_i\leq t\}}$, where $T_i$ is the random execution time for machine $i$ that is assigned the load $\ell_i$ and $\mathbbm{1}_{\{\cdot\}}$ denotes the indicator function. Then, the \textit{aggregate return} at the master node at time $t$ is:
\begin{equation}
\label{aggrret}
X(t)=\sum_{i=1}^{n}X_i(t).\nonumber
\end{equation}

We propose the following two-step alternative formulation for $\mathcal{P}_{\textnormal{main}}$ defined in (\ref{eq:2}). First, for a fixed feasible time $t$, we maximize the aggregate return over different load allocations, i.e., we solve
\begin{align}
\label{eq:4}
& \mathcal{P}_{\textnormal{alt}}^{(1)}: \Bell^*(t) =\arg \max_{\Bell} \Expc\big[X(t)\big].
\end{align}
Then, given the load allocation $\Bell^*(t)=\big(\ell^*_1(t),\cdots,\ell^*_n(t) \big)$ obtained from $\mathcal{P}_{\textnormal{alt}}^{(1)}$, we find the smallest time $t$ such that with high probability, there is enough aggregate return by time $t$ at the master node, i.e., we solve
\begin{equation}
\label{eq:5}
\begin{aligned}
 \mathcal{P}_{\textnormal{alt}}^{(2)}:\,\,\,\,\, &{\text{minimize}}
&& t \\
&  \text{subject to}
&& \Prob\big[{X^*}(t)<r\big] = o\left(\frac{1}{ n}\right),
\end{aligned}
\end{equation}
where ${X^*}(t)$ is the aggregate return at time $t$ for  load allocation obtained from $\mathcal{P}_{\textnormal{alt}}^{(1)}$, that is
\begin{equation}
\label{aggreturn}
X^*(t)=\sum_{i=1}^{n} X_i^*(t)=\sum_{i=1}^{n}\ell^*_i(t)\,\mathbbm{1}_{\{T_i \leq t\}}. \nonumber
\end{equation}
From now onward, we denote the solution to $\mathcal{P}_{\textnormal{alt}}^{(2)}$ by $t^*$ and hence  $\Bell^*(t^*)$ denotes the solution to the two-step alternative formulation in (\ref{eq:4}) and (\ref{eq:5}) which gives rise to our proposed HCMM scheme described next. 

\subsection{Solving the Alternative Formulation}\label{secIII-B}

Considering the exponential distribution for workers' run-times, we first proceed to solve $\mathcal{P}_{\textnormal{alt}}^{(1)}$ in  (\ref{eq:4}). The expected number of equations aggregated at the master node at time $t$ is:
\begin{equation}
\label{eq:3}
\Expc\big[X(t)\big]=\sum_{i=1}^{n}\Expc\big[X_i(t)\big]=\displaystyle \sum_{i=1}^n \ell_i\left(1-e^{-\frac{\mu_i}{\ell_i}(t-a_i\,\ell_i)}\right). \nonumber
\end{equation}
Since there is no constraint on load allocations, $\mathcal{P}_{\textnormal{alt}}^{(1)}$ can be decomposed to $n$ decoupled optimization problems, i.e.,
\begin{equation}
\label{liopt}
\ell^*_i(t)=\arg \max_{\ell_i} {\Expc\big[X_i(t)\big]}, 
\end{equation}
for all workers $i \in [n]$. The solution to (\ref{liopt}) satisfies the following optimality condition:
\begin{align}
\frac{\partial }{\partial \ell_i}\Expc\left[X_i(t)\right] &=1-e^{-\frac{\mu_i}{\ell_i}(t-a_i \ell_i)}\left(\frac{\mu_i t}{\ell_i}+1\right)=0, \nonumber
\end{align}
which yields 
\begin{equation}\label{optload}
    \ell_i^*(t) =\frac{t}{\lambda_i}, 
\end{equation}
where $\lambda_i=\Theta{(1)}$ is a constant independent of $t$ and is the positive solution to the following equation:
\begin{equation}
\label{eq:lambda}
e^{\mu_i \lambda_i}=e^{a_i \mu_i}(\mu_i \lambda_i+1). \nonumber
\end{equation}
One can easily check that the condition $t \geq a_i \ell_i^*(t)$ holds for all $i$ as well.  Moreover, we denote by $t^*$ the solution to $\mathcal{P}_{\textnormal{alt}}^{(2)}$. Now, we define the HCMM load allocation as 
\begin{equation}\label{eq:HCMM_load}
    \ell_i^*(t^*) =\frac{t^*}{\lambda_i},
\end{equation}
for all workers $i$.  In the following,  we  formally define the HCMM algorithm  which is basically the solution to 
$\mathcal{P}_{\textnormal{alt}}$.

 \begin{algorithm}[t]
\caption{Heterogeneous Coded Matrix Multiplication (HCMM)}  \label{def:HCMM}
\hspace*{\algorithmicindent} \textbf{Input:} computation time parameters $(a_i, \mu_i)$ for each \\
\hspace*{\algorithmicindent} worker $i$ \footnotemark  \\
\hspace*{\algorithmicindent} \textbf{Output:} computation load assigned to each worker $i$
\begin{algorithmic}[1]
\Procedure{HCMM}{}
\State solve $\mathcal{P}_{\textnormal{alt}}^{(1)}$ for any feasible $t$
\State obtain $\ell_i^*(t) =\frac{t}{\lambda_i}$
\State solve $\mathcal{P}_{\textnormal{alt}}^{(2)}$ and obtain $t^*$
\State \Return  $\ell_i^*(t^*) = \frac{t^*}{\lambda_i}$ row vector computations for\\
\hspace*{\algorithmicindent}worker $i$
\EndProcedure
\end{algorithmic}
\end{algorithm}

\begin{remark}
We note that in order to implement any load allocation scheme, each worker supposedly admits an integer number of rows as its associated computation load. However, the load allocation $\ell_i^*(t^*)$ given by HCMM scheme in Algorithm \ref{def:HCMM} is a real number for any worker $i$ and therefore one needs to round the result before proceeding with experiments. In practical scenarios, $\ell_i^*(t^*)$ is fairly large, e.g.  in the order of $100$ row vectors. Therefore, the effect of rounding the load allocations shall be insignificant. 
\end{remark}

We now provide an approximation to $t^*$ and show it asymptotically converges to $t^*$. The expected aggregate return at time $t$ for optimal loads obtained in (\ref{optload}) is
\begin{align}
\label{eq:12}\Expc\left[X^*(t)\right] &= \sum_{i=1}^n \ell_i^*(t)\left(1-e^{-\frac{\mu_i}{\ell_i^*(t)}\left(t-a_i \ell_i^*(t)\right)}\right) \nonumber\\
&= \sum_{i=1}^n \frac{t}{\lambda_i}\left(1-e^{-\frac{\mu_i}{{t}/{\lambda_i}}\left(t- \frac{a_i t}{\lambda_i}\right)}\right) \nonumber\\
&=t s,
\end{align}
where 
\begin{equation} s=\sum_{i=1}^{n}\frac{1}{\lambda_i}\left(1-e^{-\mu_i\lambda_i(1-\frac{a_i} {\lambda_i})}\right) =\sum_{i=1}^{n} \frac{\mu_i}{1+\mu_i\lambda_i}= \Theta(n), \nonumber
\end{equation}
since $\mu_i=\Theta{(1)}$ and $\lambda_i=\Theta{(1)}$. Let $\tau^*$ be the solution to the following equation when solved for $t$: 
\begin{equation}
\label{eq:8}
\Expc\big[X^*(t)\big]=\sum_{i=1}^n \ell_i^*(t)\left(1-e^{-\frac{\mu_i}{\ell_i^*(t)}(t-a_i \ell_i^*(t))}\right)=r.
\end{equation}
In other words, $\tau^*$ is the time for which there are exactly $r$ inner products -- on average -- aggregated at the master node, when the workers are loaded according to the loading obtained in (\ref{optload}).
Using (\ref{optload}), (\ref{eq:12}) and (\ref{eq:8}), we find that 
\begin{align}
\tau^* &= \frac{r}{s}=\Theta(1),\label{topt}\\
\ell_i^*(\tau^*) &= \frac{\tau^*}{\lambda_i}=\frac{r}{s\lambda_i}=\Theta(1).  \label{lopt}
\end{align}

\footnotetext{For the shifted Weibull distribution, parameters $(a_i, \mu_i, \alpha_i)$ are taken as inputs.}

We now present the following lemma, which shows that $\tau^*$ converges to $t^*$ for large $n$ (see  Appendix for proof).

\begin{lemma}
\label{lemma1}
Let $t^*$ be the solution to the alternative formulation $\mathcal{P}_{\textnormal{alt}}$ in (\ref{eq:4}-\ref{eq:5}) and $\tau^*$ be the solution to (\ref{eq:8}). Then, 
\begin{equation}
\label{ttau}
  \tau^*\leq t^*\leq \tau^*+o(1). \nonumber
\end{equation}
\end{lemma}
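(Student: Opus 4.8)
The plan is to prove the two inequalities in (\ref{ttau}) separately, exploiting the definitions of $t^*$ and $\tau^*$ together with a concentration argument for the aggregate return $X^*(t)$.

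First I would establish the lower bound $\tau^* \leq t^*$. Recall that $\tau^*$ is defined by $\Expc[X^*(\tau^*)] = r$, while $t^*$ is the smallest feasible time satisfying the high-probability constraint $\Prob[X^*(t^*) < r] = o(1/n)$. The key observation is that $\Expc[X^*(t)] = ts$ is strictly increasing in $t$, so for any $t < \tau^*$ we have $\Expc[X^*(t)] < r$. Intuitively, if the expected aggregate return is already below $r$, then $X^*(t)$ cannot exceed $r$ with overwhelming probability; more precisely, I would argue that $\Prob[X^*(t) < r] \geq \Prob[X^*(t) \leq \Expc[X^*(t)]]$, and since $X^*(t)$ is a sum of independent bounded random variables whose mean is $\Theta(n)$, this probability is bounded below by a constant (it does not vanish, let alone at rate $o(1/n)$). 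Hence no $t < \tau^*$ can be feasible for $\mathcal{P}_{\textnormal{alt}}^{(2)}$, which forces $t^* \geq \tau^*$.

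Next I would establish the upper bound $t^* \leq \tau^* + o(1)$. The strategy is to exhibit a feasible point for $\mathcal{P}_{\textnormal{alt}}^{(2)}$ of the form $t = \tau^* + \delta_n$ with $\delta_n = o(1)$, which by minimality of $t^*$ gives the claim. At such a $t$, the expected aggregate return is $\Expc[X^*(t)] = ts = r + \delta_n s$, so the gap between the mean and the threshold $r$ is $\delta_n s = \Theta(\delta_n n)$. I would then apply a concentration inequality, most naturally a Chernoff-type or Hoeffding-type bound, to the sum $X^*(t) = \sum_{i=1}^n \ell_i^*(t)\,\mathbbm{1}_{\{T_i < t\}}$ of independent terms. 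Because each $\ell_i^*(t) = \Theta(1)$ by (\ref{lopt}), the summands are uniformly bounded, and the deviation of $X^*(t)$ below $r$ corresponds to a deviation of size $\delta_n s$ below its mean. Choosing $\delta_n$ to decay slowly enough—for instance $\delta_n = \Theta(\sqrt{\log n / n})$ or any $o(1)$ sequence for which the Chernoff exponent $\Theta(\delta_n^2 n)$ grows faster than $\log n$—makes $\Prob[X^*(t) < r]$ decay faster than any polynomial in $n$, and in particular $o(1/n)$. This verifies feasibility of $\tau^* + \delta_n$ and yields $t^* \leq \tau^* + o(1)$.

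The main obstacle I anticipate is the concentration step in the upper bound: I must verify that the tail bound genuinely delivers the rate $o(1/n)$ demanded by the constraint in (\ref{eq:5}), rather than merely $o(1)$. This requires carefully tracking how the Chernoff exponent scales with both $\delta_n$ and $n$, and confirming that the boundedness $\ell_i^*(t) = \Theta(1)$ and the variance $\Theta(n)$ of $X^*(t)$ interact so that a vanishing $\delta_n$ still produces superpolynomial decay. A secondary technical point is ensuring that the optimal loads $\ell_i^*(t)$, and hence the quantity $s$, vary continuously and remain $\Theta(1)$ as $t$ ranges over $[\tau^*, \tau^* + \delta_n]$, so that the mean computation $\Expc[X^*(t)] = ts$ remains valid throughout this interval; this follows from (\ref{optload}) and the $t$-independence of each $\lambda_i$, so it should pose no real difficulty.
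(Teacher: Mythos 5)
Your proposal is correct and follows essentially the same route as the paper's proof: the paper establishes $\tau^*\leq t^*$ by the same contradiction (if $t^*<\tau^*$ the expected return falls short of $r$, so $X^*(t^*)<r$ with constant probability, violating the $o(1/n)$ constraint), and establishes $t^*\leq \tau^*+o(1)$ by applying McDiarmid's inequality—equivalent to your Hoeffding bound for these bounded independent summands—at $t=\tau^*+\delta$ with $\delta=\Theta\left(\frac{\log n}{\sqrt{n}}\right)$, giving exponent $\Theta(n\delta^2)=\Theta(\log^2 n)$ and hence probability $o\left(\frac{1}{n}\right)$. One small correction: your suggested instance $\delta_n=\Theta\left(\sqrt{\log n/n}\right)$ yields exponent only $\Theta(\log n)$, i.e., decay $n^{-c}$ with $c$ depending on untracked constants, which is borderline and fails your own (correct) criterion that the exponent grow strictly faster than $\log n$; choosing $\delta_n=\frac{\log n}{\sqrt{n}}$ as the paper does satisfies that criterion and completes the argument.
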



\subsection{Asymptotic Optimality of HCMM}\label{secIII-C}
In this subsection, we prove the asymptotic optimality of HCMM as claimed in Theorem \ref{theorem1}.
\begin{proof}[Proof of Theorem \ref{theorem1}]
Consider the HCMM load assignment in (\ref{eq:HCMM_load}). Let the random variable $T_{\mathsf{HCMM}}$ denote the finish time associated to this load allocation, i.e. the waiting time to receive at least $r$ inner products from the workers. Let $T_{\mathsf{max}}$ be the random variable denoting the finish time of all the workers for the HCMM load assignment. 

First, we show that
\begin{equation}
\label{eq:16}
    \Expc[T_{\mathsf{HCMM}}]\leq t^*+o(1).\nonumber
\end{equation}
Let us define two events $\mathcal{E}_1$ and $\mathcal{E}_2$ as follows:
\begin{equation}
\label{eq:events}
    \mathcal{E}_1 = \{ T_{\mathsf{max}} > \Theta(n)\}  \, \text{ and } \, \mathcal{E}_2 = \{ T_{\mathsf{HCMM}} > t^*\}.\nonumber
\end{equation}
Conditioning on these events, we can write 
\begin{align}
\label{eq:th1}
    \Expc[T_{\mathsf{HCMM}}] &= \Expc[T_{\mathsf{HCMM}} | \mathcal{E}_1] \Prob[\mathcal{E}_1 ] \nonumber\\
    & \quad + \Expc[T_{\mathsf{HCMM}} | {\mathcal{E}_1}^c \cap \mathcal{E}_2] \Prob[{\mathcal{E}_1}^c \cap \mathcal{E}_2 ] \nonumber\\
    & \quad + \Expc[T_{\mathsf{HCMM}} | {\mathcal{E}_1}^c \cap {\mathcal{E}_2}^c] \Prob[{\mathcal{E}_1}^c \cap {\mathcal{E}_2}^c ].
\end{align}
We can write the second term in RHS of (\ref{eq:th1}) as follows:
\begin{align}
\label{eq:th2}
    \Expc[ & T_{\mathsf{HCMM}} | {\mathcal{E}_1}^c \cap \mathcal{E}_2] \Prob[{\mathcal{E}_1}^c \cap \mathcal{E}_2 ] \nonumber\\
    & = \Expc[T_{\mathsf{HCMM}} | T_{\mathsf{max}} \leq \Theta(n) , T_{\mathsf{HCMM}} > t^*] \nonumber\\
    & \quad \times \Prob[T_{\mathsf{max}} \leq \Theta(n) , T_{\mathsf{HCMM}} > t^* ]  \nonumber\\
    & \leq \Expc[T_{\mathsf{max}} | T_{\mathsf{max}} \leq \Theta(n) , T_{\mathsf{HCMM}} > t^*] \Prob[ T_{\mathsf{HCMM}} > t^* ]  \nonumber\\
    &\overset{(a)}{\leq} \Theta(n) \cdot o\left(\frac{1}{ n}\right)\nonumber\\
    &= o(1).
\end{align}
To prove $(a)$, we note that HCMM returns $r$ inner products by time $T_{\mathsf{HCMM}}$. Moreover, the aggregate return is increasing in time. Therefore, 
\begin{align}
    \Prob[ T_{\mathsf{HCMM}} > t^* ]  \leq \Prob[ X^*(t^*) < r ] = o\left( \frac{1}{n} \right).\nonumber
\end{align}
Furthermore, we have
\begin{align}
    &\Expc[T_{\mathsf{max}} | T_{\mathsf{max}} \leq \Theta(n) , T_{\mathsf{HCMM}} > t^*] \nonumber\\
    & \quad = \frac{1}{\Pr[T_{\mathsf{max}} \leq \Theta(n) , T_{\mathsf{HCMM}} > t^*]} \nonumber\\
    & \quad \quad \times \int_{t_1 = 0}^{\Theta(n) } \int_{t_2 = t^*}^{\infty} t_1 d \Pr[T_{\mathsf{max}}  \leq t_1, T_{\mathsf{HCMM}} \leq t_2] \nonumber\\
    & \quad \leq \frac{\Theta(n)}{\Pr[T_{\mathsf{max}} \leq \Theta(n) , T_{\mathsf{HCMM}} > t^*]} \nonumber\\
    & \quad \quad \times \int_{t_1 = 0}^{\Theta(n) } \int_{t_2 = t^*}^{\infty}  d \Pr[T_{\mathsf{max}}  \leq t_1, T_{\mathsf{HCMM}} \leq t_2] \nonumber\\
    & \quad = \Theta(n).\nonumber
\end{align}
Moreover, the third term in RHS of (\ref{eq:th1}) can be written as 
\begin{align}
\label{eq:th3}
    \Expc[ T_{\mathsf{HCMM}} | {\mathcal{E}_1}^c & \cap {\mathcal{E}_2}^c]  \Prob[{\mathcal{E}_1}^c \cap {\mathcal{E}_2}^c ] \nonumber\\
    & = \Expc[T_{\mathsf{HCMM}} | T_{\mathsf{max}} \leq \Theta(n) , T_{\mathsf{HCMM}} \leq  t^*] \nonumber\\
    & \quad \times\Prob[T_{\mathsf{max}} \leq \Theta(n) , T_{\mathsf{HCMM}} \leq  t^* ]  \nonumber\\
    & \leq  \Expc[T_{\mathsf{HCMM}} | T_{\mathsf{max}} \leq \Theta(n) , T_{\mathsf{HCMM}} \leq  t^*]\nonumber\\
    & \overset{(b)}{\leq} t^*,
\end{align}
where proof of $(b)$ is similar to proof of $(a)$ in (\ref{eq:th2}). Regarding the first term in RHS of (\ref{eq:th1}), we have 
\begin{align}
\label{eq:th4}
    \Expc[T_{\mathsf{HCMM}} | \mathcal{E}_1] \Prob[\mathcal{E}_1 ]  & = \Expc[T_{\mathsf{HCMM}} | T_{\mathsf{max}} > \Theta(n)] \nonumber\\
    & \quad \times \Prob[T_{\mathsf{max}} > \Theta(n) ]  \nonumber\\
    & \leq \Expc[T_{\mathsf{max}} | T_{\mathsf{max}} > \Theta(n)] \nonumber\\
    & \quad \times \Prob[T_{\mathsf{max}} > \Theta(n) ] \nonumber\\
    &= \int_{\Theta(n)}^{\infty}tf_{\mathsf{max}}(t)\,dt \nonumber\\
     & \overset{(c)} \leq \int_{\Theta(n)}^{\infty}tnk_1 e^{-k_1t}\left(1-e^{-k_1t}\right)^{n-1} \,dt \nonumber\\
    &\leq   \int_{\Theta(n)}^{\infty} n k_1 t e^{-k_1t} \,dt \nonumber\\
   &\leq \int_{\Theta(n)}^{\infty} \frac{1}{t^2} \,dt = o(1),
\end{align}
for some $k_1 = \Theta(1)$ and large enough $n$. To derive inequality $(c)$, we find a stochastic upper bound on $T_{\mathsf{max}}$ by considering $n$ i.i.d. copies of the worker run-times with largest shift and smallest straggling parameters that are also $\Theta(1)$, and use the PDF of the maximum of $n$ i.i.d. exponential random variables. As we later use in the proof of Theorem \ref{theorem3}, one can similarly write  for the shifted Weibull distribution: 
\begin{align}
\label{eq:th5}
    \Expc[T_{\mathsf{HCMM}} | \mathcal{E}_1] & \Prob[\mathcal{E}_1 ]  \nonumber\\
    & \leq \int_{\Theta(n)}^{\infty}tf_{\mathsf{max}}(t)\,dt \nonumber\\
    &\leq \int_{\Theta(n)}^{\infty} nk_1 k_2 t^{k_2} e^{-k_1t^{k_2}}\left(1-e^{-k_1t^{k_2}}\right)^{n-1} \,dt \nonumber\\
    & \leq\int_{\Theta(n)}^{\infty}  nk_1 k_2 t^{k_2} e^{-k_1t^{k_2}} \,dt \nonumber\\
    &\leq \int_{\Theta(n)}^{\infty} \frac{1}{t^2} \,dt = o(1),
\end{align}
for some constants $k_1$ and $k_2$. Therefore, using (\ref{eq:th2}), (\ref{eq:th3}) and (\ref{eq:th4}) (or (\ref{eq:th5}) for the shifted Weibull model) in (\ref{eq:th1}) we have
\begin{equation}
\Expc[T_{\mathsf{HCMM}}]\leq t^*+o(1).\nonumber
\end{equation}

Let $\Bell_{\mathsf{OPT}}=(\ell_{\mathsf{OPT},1},\cdots,\ell_{\mathsf{OPT},n})$ denote the optimal load allocation corresponding to $\mathcal{P}_{\textnormal{main}}$ in (\ref{eq:2}) and $X_{\mathsf{OPT}}(\cdot)$ represent the aggregate return under load allocation $\Bell_{\mathsf{OPT}}$. Now we prove the following lower bound on the average completion time of the optimum algorithm:
\begin{equation}
\Expc[T_{\mathsf{OPT}}]  \geq    t^*-o(1).\nonumber
\end{equation}
 To this end, we show the following two inequalities,
\begin{equation}
\Expc[T_{\mathsf{OPT}}] \overset{(d)} \geq \tau -\delta_1  \overset{(e)} \geq t^*-\delta_2-\delta_1,\nonumber
\end{equation}
where $\delta_1 = \Theta \left(\frac{\log n}{\sqrt{n}}\right) $, $\delta_2=\Theta \left(\frac{\log n}{\sqrt{n}}\right)$ and $\tau$ is the solution to $
 \Expc[X_{\mathsf{OPT}}(\tau)]=r.$
We have
\begin{align}
r-\Expc[X_{\mathsf{OPT}} & (\tau-\delta_1)] \nonumber\\
&= \sum_{i=1}^{n} \ell_{\mathsf{OPT},i} \big( \Prob[T_i<\tau]-\Prob[T_i<\tau-\delta_1]   \big)\nonumber \\
&= \sum_{i=1}^{n} \ell_{\mathsf{OPT},i} \left( \frac{d}{d\tau}\Prob[T_i<\tau]\delta_1+\mathcal{O}\left(\delta_1^2\right) \right)\nonumber \\
&= \Theta(n\delta_1)+\mathcal{O}\left(n \delta^2_1\right)=\Theta(n\delta_1)\label{eq:orderndelta},\nonumber
\end{align}
where we used the fact that $\ell_{\mathsf{OPT},i}=\Theta(1)$\footnote{We argue that the allocated loads in the optimum coded scheme are all $\Theta(1)$. Without loss of generality, suppose $\ell_{\mathsf{OPT},1}> \Theta(1)$ which implies $\lim_{n\to \infty}\Prob[T_1<t]=0$ for any $t=\Theta(1)$. We have already implemented HCMM, a (sub-)optimal algorithm achieving computation time $\tau^*=\Theta(1)$, therefore the optimal scheme should have a better finishing time $\tau \leq \Theta(1)$. Now assume the load of machine 1 is replaced by $\tilde{\ell}_{\mathsf{OPT},1}=\Theta(1)$. Clearly, for any time $t=\Theta(1)$, the aggregate return for the new set of loads is larger than the former one by any $\Theta(1)$ time, almost surely. This is in contradiction to optimality assumption.}.
By McDiarmid's inequality (see Appendix for its description), we have

\begin{align*}
\Prob [X_{\mathsf{OPT}}(\tau-\delta_1) & \geq  r ] \\
&  = \Prob[X_{\mathsf{OPT}}(\tau-\delta_1)-\Expc[X_{\mathsf{OPT}}(\tau-\delta_1)]\\
& \quad \quad \quad \geq r-\Expc[X_{\mathsf{OPT}}(\tau-\delta_1)]]\\
&\leq \exp \left(-\frac{2\big(\Expc[X_{\mathsf{OPT}}(\tau-\delta_1)] - r\big)^2}{\sum_{i=1}^{n} \ell^2_{\mathsf{OPT},i}}\right) \\
&= e^{-\Theta\left({n\delta_1^2}\right)}= o\left(\frac{1}{ n}\right),
\end{align*}
which implies inequality $(d)$. We proceed to prove $(e)$ by showing the following two inequalities,
\begin{equation}
\tau  \geq \tau^*, \label{eq46} 
\end{equation}
\begin{equation}
\tau^*  \geq t^*-\delta_2, \label{eq45}
\end{equation}
where $\tau^*$ is obtained in (\ref{topt}). Given the fact that HCMM maximizes the expected aggregate return, we have
\begin{equation}
\Expc[X^*(t)] \geq \Expc[X_{\mathsf{OPT}}(t)],\nonumber
\end{equation}
for every feasible $t$, which implies (\ref{eq46}). Moreover, Lemma \ref{lemma1} proves (\ref{eq45}).
All in all, we have
\begin{equation}
t^*-o(1) \leq \Expc[T_{\mathsf{OPT}}] \leq  \Expc[T_{\mathsf{HCMM}}] \leq t^* + o(1),\nonumber
\end{equation}
which yields $\lim_{n \to \infty} \Expc[T_{\mathsf{HCMM}}] =\lim_{n \to \infty} \Expc[T_{\mathsf{OPT}}]$ and the claim is concluded.
\end{proof}
\vspace{-2mm}
\subsection{Comparison with Uncoded Schemes}\label{secIII-D}
\label{Comparison with uncoded scheme}
This subsection provides the proof of Theorem \ref{theorem2} by comparing the performance of HCMM to  uncoded scheme. In an uncoded scheme, the redundancy factor is $1$; thus, the master node has to wait for the results from all the worker nodes in order to complete the computation. 

\begin{proof}[Proof of Theorem \ref{theorem2}]
We start by characterizing the expected run-time of the best uncoded scheme. Particularly, we show that 
\begin{equation}
\Expc[T_{\mathsf{UC}}]=\Theta\big(\log n\big),\nonumber
\end{equation}
where $T_{\mathsf{UC}}$ denotes the completion time of the optimum uncoded distributed matrix multiplication algorithm. To do so, we start by showing that 
\begin{equation}
\Expc[T_{\mathsf{UC}}] \geq c\log n,\nonumber
\end{equation}
for a constant $c$ independent of $n$. For a set of machines with parameters $\{(a_i,\mu_i)\}_{i=1}^n$, let $\tilde{a} = \min_i a_i$ and $\tilde{\mu} = \max_i \mu_i$.
Now, consider another set of $n$ machines in which every machine is replaced with a faster machine with parameters $(\tilde{a},\tilde{\mu})$. Since the computation times of the new set of machines are i.i.d., one can show that the optimal load allocation for these machines is uniform, i.e., 
\begin{equation}
\widetilde{\ell}^*_i=\frac{r}{n},\nonumber
\end{equation}
for every machine $i\in [n]$.
Let $\{\widetilde{T}_i\}_{i=1}^{n}$ represent the i.i.d. shifted exponential random variables denoting the execution times for the new set of machines where each machine is loaded by $\widetilde{\ell}^*_i=\frac{r}{n}$. Therefore, the CDF of the completion time of each new machine can be written as
\begin{align}
\Prob\left[\widetilde{T}_i\leq t\right]=&1-e^{{-\frac{\tilde{\mu}}{\widetilde{\ell}^*_i}\left(t-\tilde{a} \widetilde{\ell}^*_i\right)}} \nonumber\\=&1-e^{-\tilde{\mu}\frac{n}{r}\left(t-\tilde{a} \frac{r}{n}\right)},\nonumber
\end{align}
for $t\geq  \frac{\tilde{a} r}{n}$ and the expected computation time can be written as
\begin{equation}
\Expc\left[\widetilde{T}_i\right]=\frac{r}{n}\left(\tilde{a} + \frac{1}{\tilde{\mu}}\right),\nonumber
\end{equation}
for all $i\in [n]$. Since the master needs to wait for all of the machines to return their results, the total run-time is $\widetilde{T}_{\mathsf{UC}} = \max_{i \in [n]} \widetilde{T}_i$. Therefore,  
\begin{equation}\label{eq:best_T_UC}
    \Expc\left[\widetilde{T}_{\mathsf{UC}}\right] = \Expc \left[\max_{i \in [n]} \widetilde{T}_i \right]=\frac{\tilde{a} r}{n}+\frac{r H_n}{n \tilde{\mu}},
\end{equation}
where $H_n=1+\frac{1}{2}+\frac{1}{3}+\cdots+\frac{1}{n}$ is the sum of the harmonic series. We can further bound (\ref{eq:best_T_UC}) using the fact that
\begin{equation}
    \frac{\tilde{a} r}{n}+\frac{r H_n}{n \tilde{\mu}} \geq \frac{\tilde{a} r}{n}+\frac{r}{n \tilde{\mu}}\log(n+1) \geq c \log n,\nonumber
\end{equation}
for a constant $c$ independent of $n$, since $r=\Theta(n)$, $\tilde{a}=\Theta(1)$, and $\tilde{\mu}=\Theta(1)$ for all $i \in [n]$. All in all, we have the following lower bound on the optimal uncoded scheme:
\begin{equation}\label{lowerb}
\Expc[T_{\mathsf{UC}}] \geq \Expc\left[\widetilde{T}_{\mathsf{UC}}\right] \geq c \log n.
\end{equation}

Now consider another set of $n$ machines, where each machine is replaced with a slower one with parameters $(\hat{a},\hat{\mu})$ for $\hat{a} = \max_i a_i$ and $\hat{\mu} = \min_i \mu_i$. By an argument similar to the one employed the lower bound, we can write  
\begin{equation}
\label{upperb}
\Expc[T_{\mathsf{UC}}] \leq \frac{\hat{a} r}{n}+\frac{r}{n \hat{\mu}}H_n \leq C \log n,
\end{equation}
for another constant $C$. From (\ref{lowerb}) and (\ref{upperb}), one can conclude that
\begin{equation}
\label{UCorder}
\Expc[T_{\mathsf{UC}}]=\Theta\big(\log n\big).
\end{equation}
Further, by Theorem \ref{theorem1} and Lemma \ref{lemma1}, we find that 
\begin{equation}
\label{HCMMorder}
\Expc[T_{\mathsf{HCMM}}]=\Theta(1).
\end{equation}
Comparing (\ref{UCorder}) to (\ref{HCMMorder}) demonstrates that HCMM outperforms the best uncoded scheme by a factor of $\Theta(\log n)$, i.e.,
\begin{equation}
\frac{\Expc[T_{\mathsf{UC}}]}{\Expc[T_{\mathsf{HCMM}}] }=\Theta\big(\log n\big).\nonumber
\end{equation}
\end{proof}

\section{Generalization to the Shifted Weibull Model and Proofs of Theorems \ref{theorem3} and \ref{theorem4}}\label{sec:weibull}
In this section, we consider the shifted Weibull distribution for the workers' execution times, which captures a broader class of run-time models than the exponential distribution. We particularly generalize our proposed HCMM algorithm to the class of shifted Weibull distributed run-times and prove Theorems \ref{theorem3} and \ref{theorem4}. More specifically, we argue that asymptotic optimality of HCMM is derived similar to the shifted exponential case and further show that HCMM provides unbounded gain over uncoded schemes, asymptotically. 

A random variable $T$ has Weibull distribution with shape parameter $\alpha>0$ and scale parameter $\mu>0$, denoted by $T \sim \mathcal{W}(\alpha,\mu)$, if the CDF of $T$ is of the following form:
\begin{equation}
\Prob[ T \leq t] = 1-e^{-(\mu t)^\alpha}, \quad t\geq 0.\nonumber
\end{equation}
The expected value of the Weibull distribution is known to be $\Expc[T]=\frac{1}{\mu}\Gamma(1+1/\alpha)$, where $\Gamma(\cdot)$ denotes the Gamma function.

As stated in Section \ref{secII-B}, we consider a 3-parameter shifted Weibull distribution for workers' run-times defined in (\ref{eq:Wdist}). The mean value of the worker $i$'s run-times is then $\Expc[T_i]=a_i \ell_i + \frac{\ell_i}{\mu_i}\Gamma(1+1/\alpha_i)$. Clearly, shifted exponential distribution is a special case of the shifted Weibull model when $\alpha_i=1$. By slight reparameterizations, this model can be similarly applied to the HCMM algorithm proposed in Algorithm \ref{def:HCMM}, meaning that the main and alternative optimization problems defined in (\ref{eq:2}), (\ref{eq:4}) and (\ref{eq:5}) can be similarly analyzed  under the shifted Weibull model.

As in the exponential case, we begin by maximizing the expected aggregate return at the master node ($\mathcal{P}_{\textnormal{alt}}^{(1)}$) under the shifted Weibull distribution, which is given by 
\begin{equation}
\Expc\left[X(t)\right]=\sum_{i=1}^{n}\Expc\left[X_i(t)\right]=\displaystyle \sum_{i=1}^n \ell_i\left(1-e^{-\left(\frac{\mu_i}{\ell_i}(t-a_i \ell_i)\right)^{\alpha_i}}\right).\nonumber
\end{equation}
The optimal load allocation that maximizes the individual expected aggregate returns at each worker (and thus the total aggregate return) can be found by solving the following equation:
\begin{align}
&\frac{\partial }{\partial \ell_i}\Expc\left[X_i(t)\right] \nonumber\\ 
& \quad =1-e^{-\left(\frac{\mu_i}{\ell_i}(t-a_i \ell_i)\right)^{\alpha_i}}\left( 1+\frac{{\mu_i}^{\alpha_i} \alpha_i t  }{\ell_i}\left(\frac{t}{\ell_i} - a_i\right)^{\alpha_i-1}\right) \nonumber\\
& \quad =0.
\label{optload2}
\end{align}
Solving (\ref{optload2}) for $\ell_i$ yields $\ell_i^*(t)=\frac{t}{\lambda_i}$ where the constant $\lambda_i > a_i$ is the positive solution to 
\begin{equation}
e^{{\mu_i}^{\alpha_i}(\lambda_i-a_i)^{\alpha_i}} = 1 + \alpha_i {\mu_i}^{\alpha_i} \lambda_i(\lambda_i-a_i)^{\alpha_i-1}.\nonumber
\end{equation}
Similar to Section \ref{HCMM}, we can define $s$ as follows,
\begin{align}
s &= \frac{\Expc\left[X^*(t)\right]}{t} \nonumber\\
&= \frac{1}{t} \displaystyle \sum_{i=1}^n \ell_i^*(t)\left(1-e^{-\left(\frac{\mu_i}{\ell_i^*(t)}\left(t-a_i \ell_i^*(t)\right)\right)^{\alpha_i}}\right)\nonumber\\
&= \sum_{i=1}^{n}\frac{1}{\lambda_i} \left(1-e^{-\left(\mu_i\lambda_i\left(1-\frac{a_i} {\lambda_i}\right)\right)^{\alpha_i}}\right)\nonumber\nonumber\\
&= \sum_{i=1}^{n} \frac{\alpha_i {\mu_i}^{\alpha_i} (\lambda_i-a_i)^{\alpha_i-1} }{1 + \alpha_i {\mu_i}^{\alpha_i} \lambda_i(\lambda_i-a_i)^{\alpha_i-1} } \nonumber\\
&=  \Theta(n).\nonumber
\end{align}
The last equality uses the fact that all the distribution parameters are constants. The expected aggregate return with optimal loads,  $\Expc\left[X^*(t)\right]$, equals to $r$ at time $t=\tau^*$. Thus, $\tau^*=\frac{r}{s}=\Theta(1)$ and $\ell_i^*(\tau^*)=\frac{\tau^*}{\lambda_i}=\frac{r}{s\lambda_i}=\Theta(1)$. 

\begin{proof}[Proof of Theorem \ref{theorem3}] 
With the aforementioned reparametrizations of ${\lambda_i}$, $s$ and $\tau^*$, the HCMM algorithm defined in Algorithm \ref{def:HCMM} is identically applicable to the Weibull model. Proof of the asymptotic optimality of HCMM under the Weibull distribution follows the similar steps as in the proof for the exponential case in Section \ref{secIII-C} (unless specifically justified, e.g. (\ref{eq:th5})). We avoid rewriting these steps for the purpose of readability of the paper, but we note that the concentration inequalities used to establish the proof of Theorem \ref{theorem1} can be applied to a wide class of distributions including the Weibull distribution.
 \end{proof}
 As an implication of Theorem \ref{theorem3}, the induced expected execution time by HCMM algorithm is asymptotically constant, that is $\Expc[T_{\mathsf{HCMM}}] = \Theta(1)$; which was also the case for shifted exponential distribution. To compare with the uncoded scenario, we start by the following lemma which characterizes the extreme value of a sequence of Weibull random variables.

\begin{lemma}\label{lemma:maxw}
Let $\{T_i\}_{i=1}^{\infty}$ be a sequence of i.i.d. $\mathcal{W}(\alpha,\mu)$ random variables and $T^*_n=\max_{i \in [n]} T_i$ denote the maximum of the first $n$ variables. Then, 
\begin{equation}
\Expc\left[T^*_n \right] \geq \Theta \left( (\log n)^{1/\alpha}\right).\nonumber
\end{equation}
\end{lemma}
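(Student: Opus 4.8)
The plan is to establish the lower bound $\Expc[T_n^*] \geq \Theta\big((\log n)^{1/\alpha}\big)$ for the maximum of $n$ i.i.d. Weibull random variables $T_i \sim \mathcal{W}(\alpha,\mu)$ by exhibiting a threshold $t_n = \Theta\big((\log n)^{1/\alpha}\big)$ such that the maximum exceeds $t_n$ with probability bounded away from zero, and then invoking the elementary inequality $\Expc[T_n^*] \geq t_n \cdot \Prob[T_n^* > t_n]$. This is a cleaner route than computing the exact mean via an integral, and it directly captures the order-wise behavior we need.

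First I would compute the tail of the maximum. Since the $T_i$ are i.i.d. with $\Prob[T_i \leq t] = 1 - e^{-(\mu t)^\alpha}$, we have
\begin{equation}
\Prob[T_n^* \leq t] = \left(1 - e^{-(\mu t)^\alpha}\right)^n,
\end{equation}
so that
\begin{equation}
\Prob[T_n^* > t] = 1 - \left(1 - e^{-(\mu t)^\alpha}\right)^n.
\end{equation}
Next I would choose the threshold. Setting $t_n = \frac{1}{\mu}(\log n)^{1/\alpha}$ gives $e^{-(\mu t_n)^\alpha} = e^{-\log n} = \frac{1}{n}$, whence
\begin{equation}
\Prob[T_n^* > t_n] = 1 - \left(1 - \frac{1}{n}\right)^n.
\end{equation}

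The key analytic fact is that $\left(1 - \frac{1}{n}\right)^n \to e^{-1}$ as $n \to \infty$, so $\Prob[T_n^* > t_n] \to 1 - e^{-1} > 0$; in particular, for all sufficiently large $n$ this probability is bounded below by a positive constant, say $c_0 = \frac{1}{2}(1 - e^{-1})$. Combining this with the nonnegativity of $T_n^*$ and Markov-type reasoning, I would write
\begin{equation}
\Expc[T_n^*] \geq t_n \cdot \Prob[T_n^* > t_n] \geq \frac{c_0}{\mu}(\log n)^{1/\alpha} = \Theta\big((\log n)^{1/\alpha}\big),
\end{equation}
which is exactly the claimed bound. I do not expect any serious obstacle here; the argument is a standard extreme-value estimate. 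The only point requiring mild care is the lower bound on the tail probability, where I would either use the limit $\left(1-\tfrac1n\right)^n \to e^{-1}$ directly for large $n$, or invoke the inequality $1 - (1-x)^n \geq 1 - e^{-nx}$ with $x = 1/n$ to obtain the uniform bound $\Prob[T_n^* > t_n] \geq 1 - e^{-1}$ for every $n$, thereby avoiding any appeal to a limit and yielding the constant cleanly.
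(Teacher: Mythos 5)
Your proposal is correct and follows essentially the same route as the paper's proof: both pick the threshold $t_n = \frac{1}{\mu}(\log n)^{1/\alpha}$, compute $\Prob[T_n^* > t_n] = 1 - \left(1-\frac{1}{n}\right)^n$, and apply Markov's inequality (your bound $\Expc[T_n^*] \geq t_n \cdot \Prob[T_n^* > t_n]$ is exactly the paper's rearranged form) together with the limit $1-\frac{1}{e}$. Your closing remark about using $1-(1-x)^n \geq 1-e^{-nx}$ to avoid the limit is a minor polish on the same argument, not a different approach.
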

\begin{proof}
Consider the sequence of maximums $\{T^*_i\}_{i=1}^{\infty}$. From Markov's inequality, we have  $\frac{\Expc[T^*_n]}{t_n} \geq \Prob[T^*_n \geq t_n]$, for any $t_n>0$ and $n\in \mathbb{N}$. Pick $t_n=\frac{1}{\mu}\big(\log n \big)^{1/\alpha}$. Therefore,
\begin{align}
 \frac{\Expc[T^*_n]}{\frac{1}{\mu}\big(\log n \big)^{1/\alpha}} &\geq \Prob \left[ T^*_n \geq \frac{1}{\mu}\big(\log n \big)^{1/\alpha} \right] \nonumber\\
 &= 1-\Prob\left[T^*_n < \frac{1}{\mu}\big(\log n \big)^{1/\alpha}\right] \nonumber\\
 &= 1 - \prod_{i=1}^{n} \Prob\left[T_i < \frac{1}{\mu}\big(\log n \big)^{1/\alpha}\right]\nonumber\\
 &= 1- \left( 1-e^{-\log n} \right)^n \nonumber\\
 &= 1- \left( 1-\frac{1}{n} \right)^n.\nonumber
\end{align}
Therefore, 
\begin{equation}
\lim_{n \to \infty}\frac{\Expc[T^*_n]}{\frac{1}{\mu}\big(\log n \big)^{1/\alpha}} \geq \lim_{n \to \infty}1- \left( 1-\frac{1}{n} \right)^n=1-\frac{1}{e} > 0.63,\nonumber
\end{equation}
which implies $\Expc[T^*_n] \geq \Theta \left( (\log n)^{1/\alpha}\right)$.
\end{proof}

Now we complete the proof of Theorem \ref{theorem4}.

\begin{proof}[Proof of Theorem \ref{theorem4}]
Recall that $T_{\mathsf{UC}}$ denotes the completion time of the optimum uncoded distributed matrix multiplication algorithm across $n$ workers parametrized by tuples $\{(a_i, \mu_i, \alpha_i,)\}_{i=1}^{n}$. To bound the mean of $T_{\mathsf{UC}}$, assume that every machine is replaced with a stochastically faster machine with parameters $(\tilde{a},\tilde{\mu},\tilde{\alpha})$ where $\tilde{a} = \min_i a_i$, $\tilde{\mu} = \max_i \mu_i$ and $\tilde{\alpha} = \max_i \alpha_i$, i.e., the expected run-time of the latter scenario is no greater than that of the former one.  For the new set of $n$ identical machines, the optimal loading is uniform, i.e., $\widetilde{\ell}^*_i=\frac{r}{n}$. Let $\{\widetilde{T}_i\}_{i=1}^{n}$ denote the i.i.d. shifted Weibull run times for new set of machines which have CDFs of the form
 \begin{align}
\Prob\left[\widetilde{T}_i\leq t \right]=&1-e^{{-\left(\frac{\tilde{\mu}}{\widetilde{\ell}^*_i}\left(t-\tilde{a} \widetilde{\ell}^*_i\right)\right)^{\tilde{\alpha}}}} \nonumber 
\\=&1-e^{-\left(\tilde{\mu}\frac{n}{r}\left(t-\tilde{a} \frac{r}{n}\right)\right)^{\tilde{\alpha}}},\nonumber
\end{align}
for $t\geq  \frac{\tilde{a} r}{n}$. The mean of computation time for the new set of machines is 
\begin{equation}
\Expc\left[\widetilde{T}_{\mathsf{UC}} \right]= \Expc \left[\max_{i\in [n]} \widetilde{T}_i \right]=\frac{\tilde{a} r}{n}+ \Expc \left[\max_{i\in [n]} \doublewidetilde{T}_i \right],\nonumber
\end{equation}
where $\doublewidetilde{T}_i = \widetilde{T}_i - \frac{\tilde{a} r}{n}$ are i.i.d.  $\mathcal{W}(\tilde{\alpha},\tilde{\mu}\frac{n}{r})$ for all workers $i \in [n]$. Using Lemma \ref{lemma:maxw}, we can write 
\begin{equation}
\label{lowerbW}
\Expc[T_{\mathsf{UC}}] \geq \Expc \left[\widetilde{T}_{\mathsf{UC}} \right] \geq \frac{\tilde{a} r}{n}+ \Theta\left( (\log n)^{1/\tilde{\alpha}}\right)=\Theta\left( (\log n)^{1/\tilde{\alpha}}\right).\nonumber
\end{equation}
Comparing the best uncoded scheme with the proposed coded algorithm demonstrates that HCMM outperforms the best uncoded scheme by a factor of at least $\Theta\left( (\log n)^{1/\tilde{\alpha}}\right)$, i.e.,
\begin{equation}
\frac{\Expc[T_{\mathsf{UC}}]}{\Expc[T_{\mathsf{HCMM}}] } \geq \Theta \left( (\log n)^{1/\tilde{\alpha}}\right).\nonumber
\end{equation}
\end{proof}

\section{Numerical Studies and Experiments using Amazon EC2 Machines}
\label{simulation}
In this section, we present our results both from simulations as well as from experiments over Amazon EC2 clusters. These results demonstrate how HCMM can provide significant speedups in comparison to state-of-the-art load allocation schemes.

\subsection{Numerical Analysis}
\label{subsec:numerical}
We now present numerical results evaluating the performance of HCMM. We consider both the shifted exponential model in (\ref{eq:1}) and the shifted Weibull model  in (\ref{eq:Wdist}) for run-time distributions in our simulations, assuming the unit seconds per row ($\text{s}/\text{row}$) for $a$ and $1/\mu$. The underlying computation task is to compute $r=10000$ inner products using a heterogeneous cluster of $n=100$ workers, where different scenarios for heterogeneity are considered. For each scenario under consideration, we implement the following load allocation schemes\footnote{For each scheme, the load number for each worker is approximated to the nearest larger integer using the $\mathtt{ceil()}$ function. For the practical large load regime considered in simulations, this rounding step has negligible impact on load allocation and on the overall results.}:

\begin{enumerate}
\item \textbf{Uniform Uncoded}: Each worker  is assigned an equal number of rows, i.e., $\ell_i=r/n$ for all workers $i$.

\item \textbf{Load-balanced Uncoded}: Each worker is assigned a load which is inversely proportional to its expected time for computing one inner product, i.e., for the shifted exponential model, $\ell_i\propto \mu_i/(a_i\mu_i+1)$, while for the shifted Weibull model, $\ell_i\propto \mu_i/(a_i\mu_i+\Gamma(1+1/\alpha_i))$ for all workers $i$. Furthermore, we set $\sum_{i=1}^n\ell_i=r$.

\item \textbf{Uniform Coded}: Equal number of coded rows are assigned to each worker. Redundancy is numerically optimized for minimizing the average computation time for receiving results of at least $r$ inner products at the master node. 

\item \textbf{HCMM}: Each worker is assigned the asymptotically optimal load allocation derived in Section \ref{secIII-B}, i.e., $\ell_i=\tau^*/\lambda_i$ for each worker $i$ according to (\ref{eq:lambda}) and (\ref{lopt}).


\end{enumerate}

\begin{figure}[t]
\includegraphics[width=0.5\textwidth]{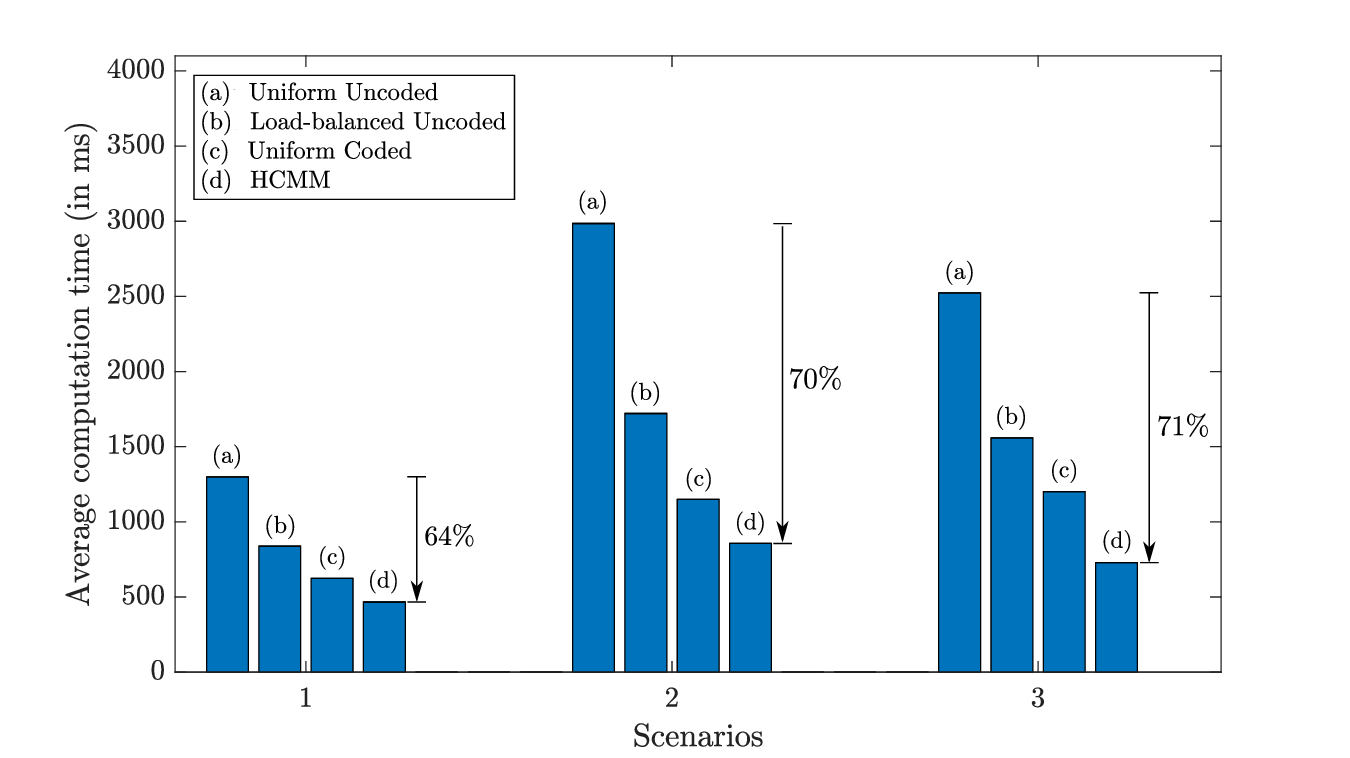}
\centering
\caption{Illustration of the performance gain of HCMM over the three benchmark schemes for the exponential run-time model. Among the three scenarios, HCMM achieves a performance improvement of up to $71\%$ over Uniform Uncoded, up to $53\%$ over Load-balanced Uncoded, and up to $39\%$ over Uniform Coded. Furthermore, the coding redundancy ${\sum_{i=1}^n\ell_i}/{r}$ for the three scenarios is in the range of $1.41-1.46$ for HCMM and in the range of $2.3-2.8$ for Uniform Coded. This demonstrates the efficient utilization of resources by HCMM.}
\label{fig:num_expo}
\end{figure}

For simulations under the shifted exponential model, we consider the following three scenarios:
\begin{itemize}
\item \textbf{Scenario $\mathbf{1}$ ($\mathbf{2}$-mode heterogeneity)}: $(a_i,\mu_i)=(1,1)$ for $50$ workers, and $(a_i,\mu_i)=(4,0.5)$ for the other $50$ workers.
\item \textbf{Scenario $\mathbf{2}$ ($\mathbf{3}$-mode heterogeneity)}: $(a_i,\mu_i)=(1,0.5)$ for $25$ workers, $(a_i,\mu_i)=(4,2)$ for $25$ workers, and $(a_i,\mu_i)=(12,0.25)$ for the remaining $50$ workers.
\item \textbf{Scenario $\mathbf{3}$ (Random heterogeneity)}: For each worker $i$, parameters $a_i$ and $\mu_i$ are sampled from the sets $\{1,4,12\}$, $\{0.5,2,0.25\}$, respectively and all uniformly at random. 
\end{itemize}

\begin{figure}[t]
\includegraphics[width=0.5\textwidth]{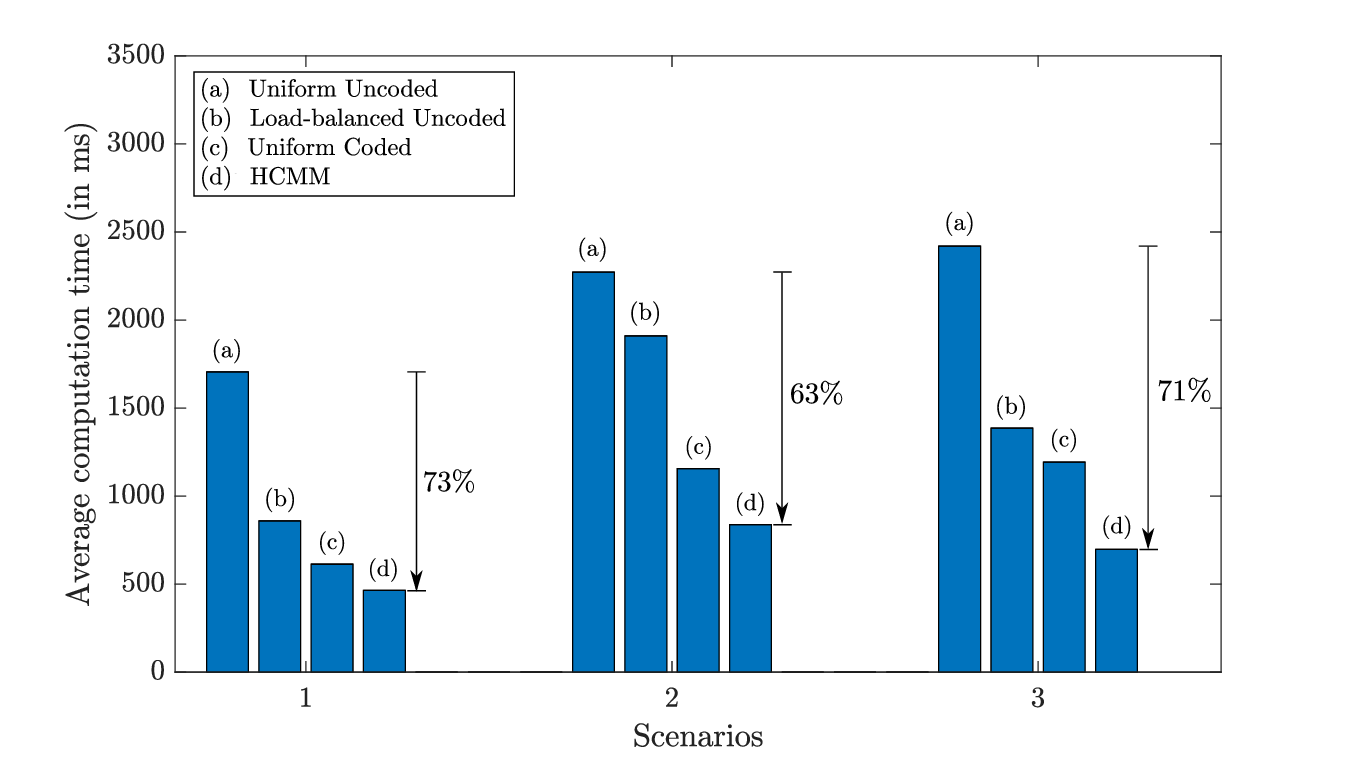}
\centering
\caption{Illustration of the performance gain of HCMM over the three benchmark schemes for Weibull model for run-time. Among the three scenarios, HCMM achieves a performance improvement of up to $73\%$ over Uniform Uncoded, up to $56\%$ over Load-balanced Uncoded, and up to $42\%$ over Uniform Coded. Furthermore, the coding redundancy $\sum_{i=1}^n\ell_i/r$ for the three scenarios is in the range of $1.30-1.42$ for HCMM and in the range of $2.0-2.5$ for Uniform Coded. This demonstrates the efficient utilization of resources by HCMM.}
\label{fig:num_weibl}
\end{figure}

The following three scenarios are considered for simulations under the shifted Weibull distribution for run-times:
\begin{itemize}
\item \textbf{Scenario $\mathbf{1}$ ($\mathbf{2}$-mode heterogeneity)}: $(a_i,\mu_i,\alpha_i)=(1,1,1.2)$ for $50$ workers, and $(a_i,\mu_i,\alpha_i)=(4,0.5,0.8)$ for the other $50$ workers.
\item \textbf{Scenario $\mathbf{2}$ ($\mathbf{3}$-mode heterogeneity)}: $(a_i,\mu_i,\alpha_i)=(1,0.5,0.9)$ for $25$ workers, $(a_i,\mu_i,\alpha_i)=(4,2,1.2)$ for $25$ workers, and $(a_i,\mu_i,\alpha_i)=(12,0.25,1.5)$ for the remaining $50$ workers.
\item \textbf{Scenario $\mathbf{3}$ (Random heterogeneity)}: For each worker $i$, parameters $a_i$, $\mu_i$ and $\alpha_i$ are sampled from the sets $\{1,4,12\}$, $\{0.5,2,0.25\}$ and $\{0.9,1.2,1.5\}$, respectively and all uniformly at random. 
\end{itemize}

Fig. \ref{fig:num_expo} and \ref{fig:num_weibl} illustrate the performance comparison of the four schemes for the two run-time models. We make the following conclusions from the results.

\begin{itemize}
    \item HCMM significantly outperforms the benchmark load allocation schemes. In particular, for the shifted exponential model, HCMM provides speedups of up to $71\%$ over Uniform Uncoded, up to $53\%$ over Load-balanced Uncoded, and up to $39\%$ over Uniform Coded, among the three scenarios. When the machine run-time is assumed to have a shifted Weibull distribution, among the three scenarios HCMM results in gains of up to $73\%$, $56\%$ and $42\%$ over Uniform Uncoded, Load-balanced Uncoded, and Uniform Coded respectively.
    
    \item The coding redundancy $\sum_{i=1}^n\ell_i/r$ for Uniform Coded is higher in comparison to the one for HCMM. In particular, for simulations under the shifted exponential model, the coding redundancy for the three scenarios is in the range of $2.3-2.8$ for Uniform Coded and in the range of $1.41-1.46$ for HCMM. For simulations under the shifted Weibull distribution, the coding redundancy is in the range of $2.0-2.5$ for Uniform Coded, while for HCMM, it is in the range $1.30-1.42$. This demonstrates that HCMM leads to a better utilization of computing resources. 
    
    \item Both Load-balanced Uncoded and Uniform Coded improve upon the performance of Uniform Uncoded. In Load-balanced Uncoded scheme, assigning larger loads to  faster machines leads to better performance, while for Uniform Coded, repeated computations lead to better performance as the master does not need to wait for all the results. HCMM provides the best expected execution time among the four schemes as it combines the gains of Load-balanced Uncoded and Uniform Coded by employing efficient load balancing along with minimal number of redundant computations.
\end{itemize}

Next, we present the results from our experiments over Amazon EC2 clusters. These results show agreement with our numerical studies.

\subsection{Experiments using Amazon EC2 machines}
\label{subsec:aws}
We use Python with \texttt{mpi4py} package \cite{dalcin2005mpi} to implement our developed HCMM scheme over Amazon EC2 clusters. To emulate the straggler effects in large-scale systems \cite{dean2013tail}, we inject artificial delays.\footnote{Artificial delays are injected since stragglers are rarely observed in small clusters in Amazon EC2. Though other emerging platforms such as federated learning, computation with deadline, mobile edge computing, fog computing, etc., still suffer from stragglers where our ideas can be employed \cite{li2017coding}.} This is achieved by selecting some workers to be stragglers at the beginning of experiments and slowing down each such worker by making it wait for $3$ times the amount of time it spends in computation before it sends its results to the master. This is done using the \texttt{sleep()} function in \texttt{time} package. For each scenario, the choice of stragglers is made by drawing a sample from the Bernoulli$(0.5)$ distribution for each worker, i.e., each worker is chosen to be a straggler with probability $0.5$. 

In line with our simulation studies, we compare the performance of HCMM with the three benchmark load allocation schemes. For Load-balanced Uncoded, the number of uncoded rows $\ell_i$ assigned to a worker $i$ is proportional to the number of virtual CPUs, and the loads are normalized to have a sum equal to $r$. For the  encoding and the decoding steps for Uniform Coded as well as HCMM, we utilize the Luby transform (LT) codes with peeling decoder which provides nearly linear decoding complexity \cite{mackay2003information}. Utilization of LT codes for distributed computing is proposed in \cite{mallick2018rateless} as well. However, they perform a homogeneous load allocation by assigning an equal number of rows of the encoded data matrix to each worker and hence do not capture the heterogeneity of the computing cluster in distributing the encoded data matrix. Towards this end, we relax our goal of recovering all the inner products from any $r$ of the coded inner products to recovering all the inner products from any $r'=r(1+\epsilon)$ coded inner products with high probability. Ideally, we would like to have $\epsilon > 0$ to be as small as possible. In our experiments, we keep $r=10000$, and based on the results in \cite{mallick2018rateless}, we use the robust Soliton degree distribution with $(c,\delta)=(0.03,0.1)$ and select $\epsilon=0.13$, where $c$ is a tuning parameter and $\delta$ is a bound on the probability of failure of decoding from a certain number of received coded inner products (see \cite{mallick2018rateless} for details). Therefore, for both HCMM and Uniform Coded, we design the load allocation such that the master needs to wait only for $r'=11300$ coded inner products. The total computation time is equal to the waiting time for $r'=11300$ results plus the average time for decoding the $r=10000$ inner products from the received $r'=11300$ coded inner products.\footnote{The average time for decoding $r=10000$ inner products from any $r(1+\epsilon)$ coded inner products is obtained using a \textBF{m4.xlarge} instance.} For HCMM,  we use the shifted exponential distribution for estimating the computation model for each worker.

\begin{figure}[t]
\includegraphics[width=0.5\textwidth]{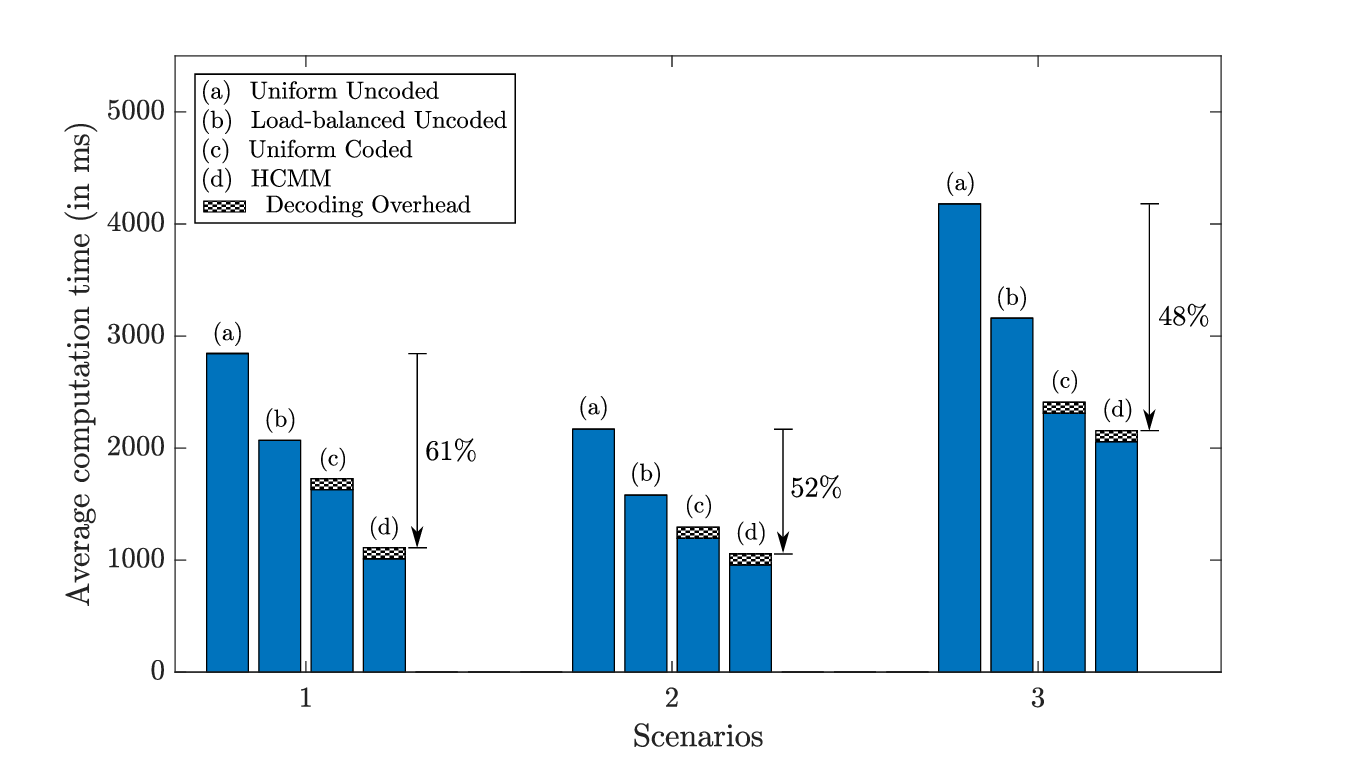}
\centering
\caption{Illustration of the performance gain of HCMM over the three benchmark schemes. Among the three scenarios, HCMM achieves a performance improvement of up to $61\%$ over Uniform Uncoded, up to $46\%$ over Load-balanced Uncoded, and up to $36\%$ over Uniform Coded. Furthermore, the coding redundancy $\sum_{i=1}^n\ell_i/r$ for the three scenarios is approximately $1.4$ for HCMM and in the range of $2.12-2.26$ for Uniform Coded. Therefore, HCMM gives the best overall execution time among the four scenarios with minimal coding overhead.}
\label{fig:amazon}
\end{figure}

For performance comparison of the four schemes, we consider the following three computing scenarios:
\begin{itemize}
\item \textbf{Scenario $\mathbf{1}$}: Each row has $500000$ elements. We use a heterogeneous cluster of $11$ machines -- one master of instance type \textBF{m4.xlarge}, four workers of instance type \textBF{r4.2xlarge}, and six workers of instance type \textBF{r4.xlarge}.
\item \textbf{Scenario $\mathbf{2}$}: Each row has $500000$ elements. We use a heterogeneous cluster of $16$ machines -- one master of instance type \textBF{m4.xlarge}, six workers of instance type \textBF{r4.2xlarge}, and nine workers of instance type \textBF{r4.xlarge}.
\item \textbf{Scenario $\mathbf{3}$}: Each row has $1000000$ elements. We use the same heterogeneous cluster as in the previous scenario.
\end{itemize}

Fig. \ref{fig:amazon} provides a performance comparison of HCMM with the benchmark load allocation schemes for the three scenarios, where the decoding time is taken into account as well. Fig. \ref{fig:instances} presents the typical cumulative distribution functions for the instances used in the experiments. We make the following conclusions from the results:
\begin{itemize}
\item As demonstrated in Fig. \ref{fig:instances}, the shifted exponential model is a good first order fit for the run-times of the workers.  
\item HCMM achieves significant speedups over the benchmark load allocation policies. In particular, HCMM combined with LT codes provides gains in the overall execution time of up to $61\%$ over Uniform Uncoded, up to $46\%$ over Load-balanced Uncoded, and up to $36\%$ over Uniform Coded. 
\item As presented in Table \ref{tab:HCMM_red_ws}, HCMM has significantly lower total computation load compared to Uniform Coded. Hence, HCMM leads to efficient utilization of the computing resources, combining the benefits of both Load-balanced Uncoded and Uniform Coded schemes.
\end{itemize}

\begin{figure}
\begin{subfigure}{\linewidth}
\centering
\includegraphics[width=0.8\textwidth]{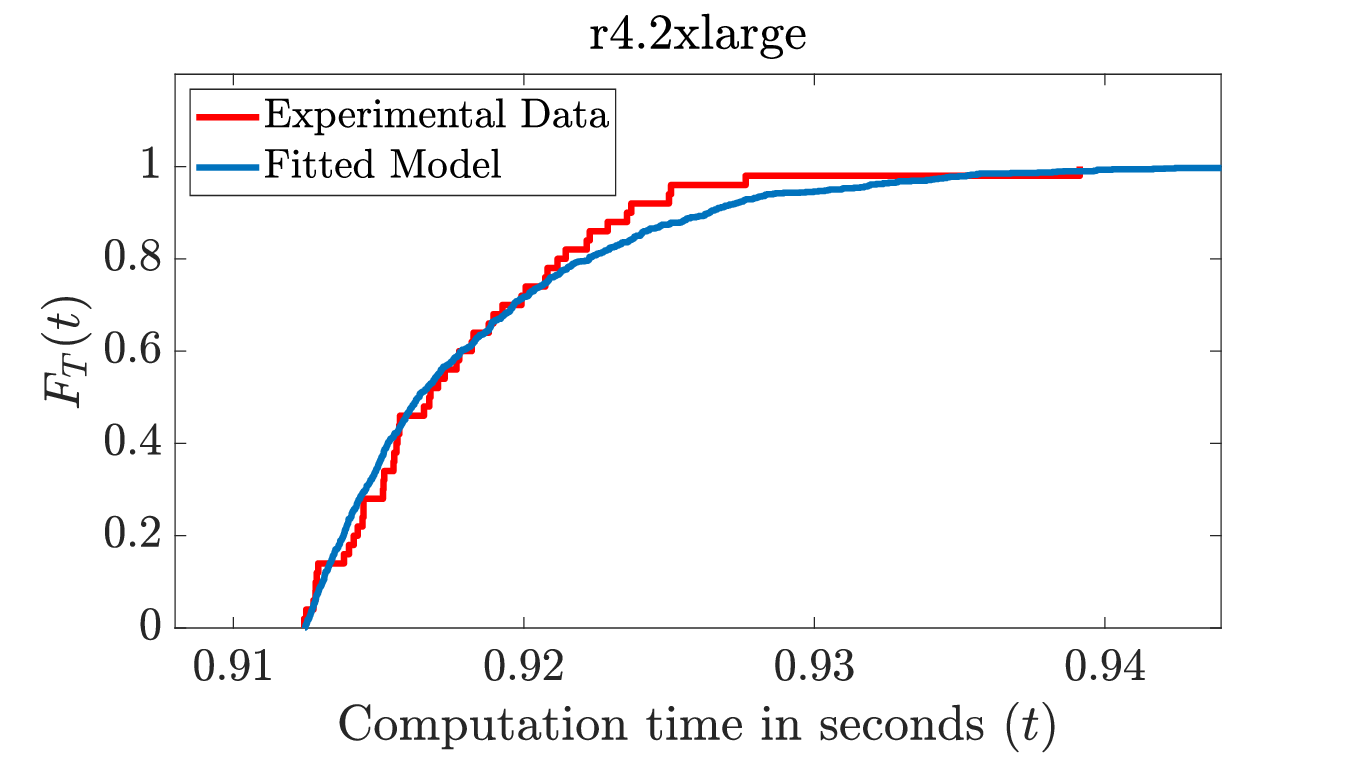}
\caption{$(a,1/\mu)=(1.37\times 10^{-3} \text{s}/\text{row}, 8.25\times 10^{-6} \text{s}/\text{row})$}
\vspace{.4cm}
\label{fig:r42xlarge}
\end{subfigure}%
\\
\begin{subfigure}{\linewidth}
\centering
\includegraphics[width=0.8\textwidth]{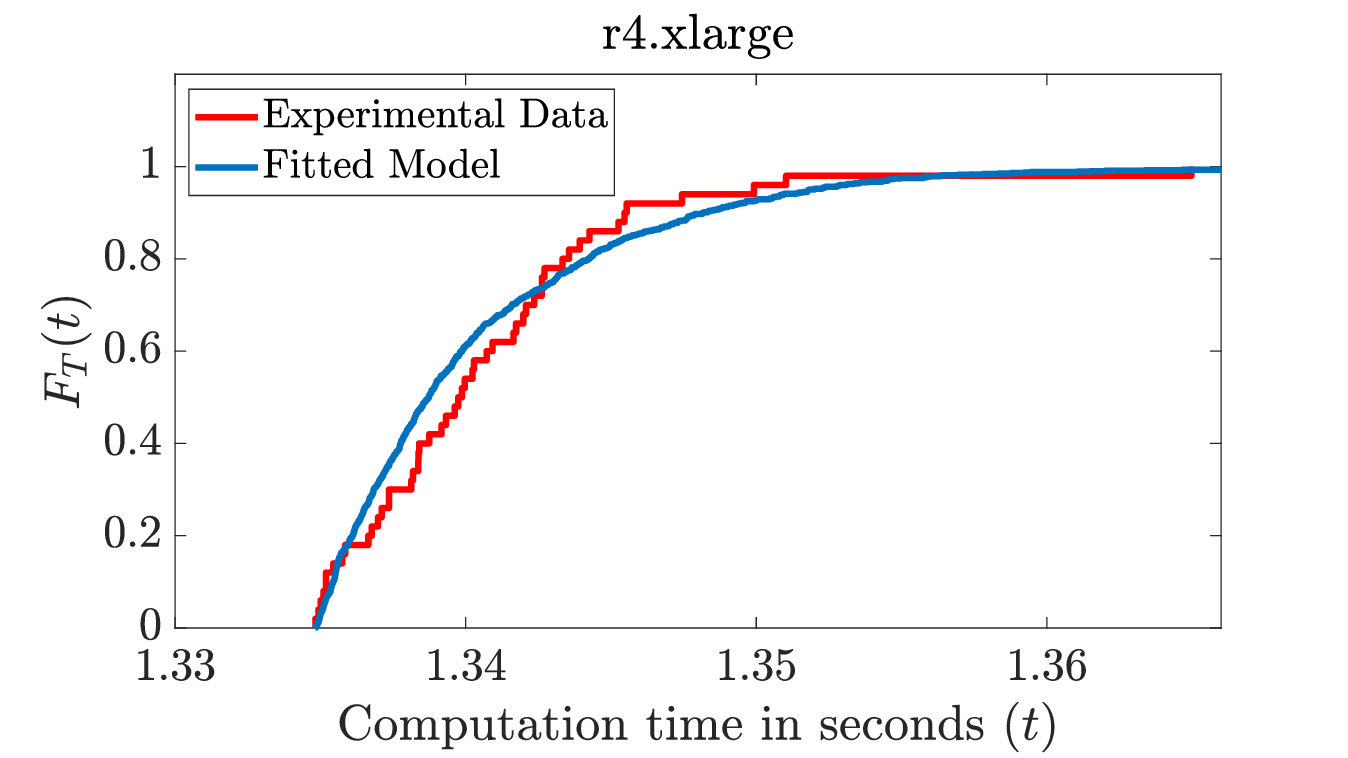}
\caption{$(a,1/\mu)=(2.00\times 10^{-3} \text{s}/\text{row}, 8.72\times 10^{-6} \text{s}/\text{row})$}
  \vspace{.5cm}
  \label{fig:r4xlarge}
\end{subfigure}
\caption{Typical empirical cumulative distribution functions for two instances used in Scenario 3 of our experiments. The measurements were taken in the absence of any manual delay. As demonstrated here, shifted exponential distribution is a good model for the task execution  time in EC2 machines.}
 \label{fig:instances}
\end{figure}

\begin{table}[t]
\begin{center}
\caption{Total computation load ($\sum_{i=1}^n\ell_i$) of HCMM and Uniform Coded}
\label{tab:HCMM_red_ws}
\begin{tabular}{|c|c|c|c|c|}
\hline
Scenario & $n$  &  HCMM& Uniform Coded	 \\ \hline
$1$ & $10$  & $11397$& $22600$	 \\ \hline
$2$ & $15$   & $11402$	 &  $21201$\\ \hline
$3$ & $15$ & $11403$	   &  $21201$\\ \hline

\end{tabular}
\end{center}
\end{table}

These results demonstrate that HCMM can provide significant speedups in large-scale computing environments.

\section{Generalization to Computing Scenarios under Budget Constraints}

In this section, we consider the optimization problem in (\ref{eq:2}) under the shifted exponential distribution with a monetary constraint for carrying out the overall computation. Running computation tasks on a commodity server costs depending on several factors including CPU, memory, ECU, storage, bandwidth, etc. Different cloud computing platforms employ different pricing policies, and these need to be taken into account for developing efficient task allocation and execution algorithms \cite{su2013cost,deelman2008cost,kondo2009cost,yi2012monetary,malawski2015algorithms}. For example, Table \ref{amazon} summarizes the cost per hour of using Amazon EC2 clusters with different parameters (at the time of writing this manuscript) \cite{amazon-price}. In this section, we take into account the monetary constraint in the optimization problem in (\ref{eq:2}) and provide a heuristic algorithm towards finding the optimal load allocation under cost budget constraint. 
 
\begin{table}[t]
\begin{center}
\caption{Amazon EC2 Pricing for Linux}
\label{amazon}
\begin{tabular}{|c|c|c|c|c|c|}
\hline
machine	& vCPU	 & ECU	 & \begin{tabular}{@{}c@{}}Memory    \\ (GiB) \end{tabular}	 & \begin{tabular}{@{}c@{}@{}}Instance \\ Storage    \\ (GB) \end{tabular}	 &  \begin{tabular}{@{}c@{}}price    \\ (/Hour) \end{tabular}   \\ \hline
 \textBF{m3.medium}	& 1 & 3 & 3.75 &  1$\times$4 SSD&   \$0.077  \\ \hline
 \textBF{m3.large}	& 2 & 6.5 &7.5  & 1$\times$32 SSD &  \$0.154   \\ \hline
\textBF{m3.xlarge}	& 4 & 13 &  15&   2$\times$40 SSD &\$0.308  \\ \hline
 \textBF{m3.2xlarge}	& 8 & 26 &  30 &  2$\times$80 SSD &\$0.616  \\ \hline
\end{tabular}
\end{center}
\end{table}

 We now present the precise problem formulation we are interested in. For a computation task and a given set of $N$ machines, the goal is to minimize the expected run-time while satisfying the budget constraint $C$, that is
\begin{equation}
\label{eq:opt-cnstrnd}
\begin{aligned}
 \mathcal{P}_{\textnormal{main-constrained}}:\,\,\,\,\,& \underset{\mathbf{\ell}}{\text{minimize}}
&& \Expc[T_\mathsf{CMP}] \\
&  \text{subject to}
&& \sum_{i=1}^{N}c_i \mathbbm{1}_{\{\ell_i>0\}}\Expc[T_\mathsf{CMP}]\leq C,
\end{aligned}
\end{equation}
where $c_i$ represents the cost per time unit of using machine $i\in[N]$.
According to the pricing polices provided by AWS, e.g. Table \ref{amazon}, a linear model for cost (versus performance parameters) is intuitive and convincing. Considering the last two rows of Table II for instance, doubling the parameters results in doubled cost. To be general, we model the computation cost of a single machine as $c=\kappa \mu^\gamma$ per unit of time, which captures a convex dependency of the speed parameter $\mu$ for constants $ \gamma\geq 1$ and $\kappa>0$. 

 We assume that there are $K$ types of machines parameterized with $\{(a_k,\mu_k)\}_{k=1}^{K}$, and $N_k,k\in[K]$ of each type is available to run a distributed computation task, where $N=\sum_{k=1}^{K} N_k$ is the total number of available machines. We also assume that $\mu_1 \leq \cdots \leq \mu_K$ and $a_1 \mu_1 = \cdots = a_K \mu_K = \xi$ for a constant $\xi$.\footnote{The latter assumption can be intuitively justified as follows. If a machine is $c$ times more powerful than another machine, as the first order estimation, one can assume that both the shift ($a_k$) and the straggling parameter ($\mu_k$) of the computation are $c$ times stronger.} As we showed in Theorem \ref{theorem1}, HCMM is asymptotically optimal (i.e. optimal within a vanishing deviation) regarding the average run-time. In this section, we also consider the asymptotic regime, i.e. for large enough number of machines and hence HCMM attains the optimality per $\mathcal{P}_{\textnormal{main}}$ in (\ref{eq:2}).
 
 
 The following lemma states a useful observation regarding the solutions to the constrained problem $\mathcal{P}_{\textnormal{main-constrained}}$ and the  minimum possible cost for carrying out a computation task. 

 \begin{lemma} 
 \label{max-min}
 HCMM is the (asymptotic) solution to the feasible  $\mathcal{P}_{\textnormal{main-constrained}}$. Moreover, given a computation task and a set of machines, decreasing the number of fastest (slowest) machines in HCMM, results in smaller (greater) expected cost. And, the minimum (maximum) cost of HCMM is induced by running the task only on any number of the slowest (fastest) machines. 
 \end{lemma}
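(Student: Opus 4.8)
The plan is to handle the three assertions in turn, after reducing each to the behavior of the HCMM cost as a function of the set $S=\{i:\ell_i>0\}$ of \emph{active} machines.

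For the first assertion, I would exploit that the budget constraint factorizes. For any load allocation $\Bell$ with support $S$, the constraint in $\mathcal{P}_{\textnormal{main-constrained}}$ reads $\big(\sum_{i\in S}c_i\big)\Expc[T_\mathsf{CMP}]\le C$, so once the support is fixed the coefficient $\sum_{i\in S}c_i$ is a constant and both the objective and the constraint are monotone increasing in $\Expc[T_\mathsf{CMP}]$. Restricting to allocations supported on $S$, the run-time-minimizing one is therefore exactly HCMM executed on the sub-cluster $S$, which is asymptotically optimal by Theorem~\ref{theorem1} applied to those machines. Since HCMM assigns a strictly positive load $\ell_i^*(t^*)=t^*/\lambda_i$ to every machine in $S$, it preserves the support and leaves $\sum_{i\in S}c_i$ unchanged; hence replacing any feasible $\Bell$ by HCMM-on-its-support neither increases the run-time nor the cost, so feasibility is retained. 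It follows that the optimum of $\mathcal{P}_{\textnormal{main-constrained}}$ is attained (asymptotically) by an HCMM allocation over some subset of machines.

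For the remaining assertions I would first derive a closed form for the HCMM cost. The crucial simplification is the hypothesis $a_k\mu_k=\xi$: inserting $a_i\mu_i=\xi$ into the defining relation \eqref{eq:lambda} and writing $u_i=\mu_i\lambda_i$ turns it into $e^{u_i}=e^{\xi}(u_i+1)$, which is independent of $i$, so $u_i\equiv u$ is a single constant across all machines. Then each term collapses as $\mu_i/(1+\mu_i\lambda_i)=\mu_i/(1+u)$, giving $s=\frac{1}{1+u}\sum_{i\in S}\mu_i$ and $\tau^*=r/s=\frac{r(1+u)}{\sum_{i\in S}\mu_i}$. Substituting the pricing model $c_i=\kappa\mu_i^{\gamma}$ and $\Expc[T_{\mathsf{HCMM}}]\to\tau^*$ yields
\begin{equation}
\mathrm{Cost}(S)=\Big(\sum_{i\in S}c_i\Big)\tau^*=\kappa\,r\,(1+u)\,\frac{\sum_{i\in S}\mu_i^{\gamma}}{\sum_{i\in S}\mu_i}.
\end{equation}
The ratio $\big(\sum_{i\in S}\mu_i^{\gamma}\big)/\big(\sum_{i\in S}\mu_i\big)$ is precisely the $\mu_i$-weighted average of $v_i:=\mu_i^{\gamma-1}$, and since $\gamma\ge 1$ the quantity $v_i$ is nondecreasing in $\mu_i$. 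The second and third assertions then reduce to an elementary property of weighted averages: deleting the term of largest value does not increase the average, and deleting the term of smallest value does not decrease it. Hence removing the fastest machine (largest $\mu_j$, so largest $v_j$) lowers $\mathrm{Cost}(S)$ while removing the slowest raises it, which is the second assertion; iterating to the extreme, the cost is minimized when only slowest-type machines survive and maximized when only fastest-type machines survive. Because a single machine type makes the ratio collapse to $\mu^{\gamma-1}$ independently of the count, the minimum (maximum) cost equals $\kappa r(1+u)\mu_1^{\gamma-1}$ (respectively $\kappa r(1+u)\mu_K^{\gamma-1}$) for \emph{any} number of the slowest (fastest) machines, which is the third assertion.

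The main obstacle I anticipate is the asymptotic bookkeeping behind the comparison, not the algebra. Since $r=\Theta(n)$, the cost $\mathrm{Cost}(S)=\Theta(n)$, whereas adding or removing one machine perturbs the weighted-average ratio by only $\Theta(1/n)$ and hence the cost by $\Theta(1)$, which is of the same order as -- or smaller than -- the $\Theta(n)\cdot o(1)$ gap between the true expected cost and its limiting value $\mathrm{Cost}(S)$ coming from the $o(1)$ slack in Theorem~\ref{theorem1} and Lemma~\ref{lemma1}. The step-by-step monotonicity is thus most cleanly phrased at the level of the deterministic cost proxy $\mathrm{Cost}(S)$ that HCMM targets, which is exactly what the heuristic uses; by contrast the extremal (minimum/maximum) statement is robust, because the cost gap between the extremal configurations is $\Theta(n)$ and dominates the $o(n)$ deviation. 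Making the comparison uniform over all subsets arising in the iteration, rather than for a single fixed pair, is the part demanding the most care.
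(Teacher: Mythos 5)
Your proposal is correct and follows essentially the same route as the paper's proof: reduce the first assertion to running HCMM on the support of the optimal budget-constrained allocation (invoking Theorem~\ref{theorem1}), exploit $a_k\mu_k=\xi$ to collapse $\lambda_i$ to a single constant (the paper's $x_\xi=1+u$) and obtain $\mathrm{cost}=\kappa r x_\xi \sum n_k\mu_k^\gamma / \sum n_k\mu_k$, then prove the monotonicity and extremal claims via the weighted-average property of $\mu_i^{\gamma-1}$, which is exactly the paper's inequality $(f)$ made explicit. Your added caveat about the $\Theta(1)$ per-machine cost change versus the $o(n)$ slack from the asymptotic optimality guarantee is a genuine subtlety the paper glosses over (it works directly with the deterministic proxy $\tau^*\sum_k n_k c_k$), and your resolution — stating the step-by-step monotonicity at the level of that proxy while noting the extremal comparison is robust — is the right one.
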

 \begin{proof}
 We first argue that if the budget-constrained problem defined in  $\mathcal{P}_{\textnormal{main-constrained}}$ is feasible, then HCMM determines the asymptotically optimal load allocation. Consider a set of $N$ machines and assume that $M$ of them are assigned non-zero loads in the optimal budget-constrained scheme. Now, one can run HCMM load allocation over the set of these $M$ machines and according to asymptotic optimality results, HCMM asymptotically attains the optimal run-time  while satisfying the budget constraint.

 Now assume that $n_k$ number of type $k \in [K]$ machine is used. Then, by assigning the loads obtained from HCMM and the result of Theorem \ref{theorem1}, the induced expected cost (for large number of machines) can be written as 
 \begin{align} \label{cost}
 \text{cost}\big(\text{HCMM}(n_1,\cdots,n_K)\big) &=\tau^* \sum_{k=1}^{K} n_k c_k \nonumber\\
 & = \frac{r}{s} \sum_{k=1}^{K} n_k c_k \nonumber\\
 & = \frac{r}{\sum_{k=1}^{K} \frac{n_k \mu_k}{1 + \mu_k \lambda_k}} \sum_{k=1}^{K} n_k \kappa \mu^\gamma_k \nonumber\\
 & = \kappa r x_{\xi} \frac{\sum_{k=1}^{K} n_k\mu_k^\gamma}{\sum_{k=1}^{K} n_k\mu_k },
 \end{align}
 where $x_{\xi} = 1 + \mu_k \lambda_k$ is the solution to the equation $e^{x_{\xi} - \xi - 1} = x_{\xi}$ for all machine type $k \in [K]$. In another scenario, assume that we remove one machine of type $K$ (the fastest machine type) and run HCMM accordingly, i.e. $n_k$ of type $k \in [K-1]$ and $n_K - 1$ of type $K$. The expected cost of this scenario can be written as follows:
 \begin{align}\label{eq:n_K - 1}
     \text{cost}\big(\text{HCMM} & (n_1,\cdots,n_K-1)\big) \nonumber\\
    & = \kappa r x_{\xi} \frac{\sum_{k=1}^{K-1} n_k\mu_k^\gamma + (n_K - 1) \mu^\gamma_K}{\sum_{k=1}^{K-1} n_k\mu_k  + (n_K - 1) \mu_K } \nonumber \\
     & \overset{(f)}{\leq} \kappa r x_{\xi} \frac{\sum_{k=1}^{K} n_k\mu_k^\gamma}{\sum_{k=1}^{K} n_k\mu_k } \nonumber\\
     &= \text{cost}\big(\text{HCMM}(n_1,\cdots,n_K)\big),
 \end{align}
 where inequality $(f)$ can be easily verified given that $\mu_1 \leq \cdots \leq \mu_K$.
 We can iteratively apply the same argument and conclude that the minimum expected cost is achieved when only the slowest machines are used, that is
 \begin{equation}\label{eq:Cm}
C_{\text{min}} \coloneqq \text{cost}\big(\text{HCMM}(n_1,0,\cdots,0)\big) = \kappa r x_{\xi} \mu^{\gamma-1}_1,
 \end{equation}
for any $1 \leq n_1 \leq N_1$. Similar to (\ref{eq:n_K - 1}), one can show that reducing the number of participating slowest machines increases the induced expected cost of HCMM, that is
\begin{align}\label{eq:n_1 - 1}
     \text{cost}\big(\text{HCMM}(& n_1 - 1,\cdots, n_K)\big) \nonumber\\
     & \geq \text{cost}\big(\text{HCMM}(n_1,\cdots,n_K)\big).
\end{align}
 Therefore, applying (\ref{eq:n_1 - 1}) iteratively shows that the maximum expected cost occurs when only the fastest machines are employed, that is
\begin{equation}\label{eq:CM}
C_{\text{max}} \coloneqq \text{cost}\big(\text{HCMM}(0,\cdots,0,n_K)\big) = \kappa r x_{\xi} \mu^{\gamma-1}_K,\nonumber
 \end{equation}
 for any $1 \leq n_K \leq N_K$.
 \end{proof}

  Lemma \ref{max-min} implies that if the available budget $C$ is less than $C_{\text{min}}$ defined in (\ref{eq:Cm}), then $\mathcal{P}_{\textnormal{main-constrained}}$ is infeasible and it is impossible to run the task on the given set of machines while satisfying the budget constraint. Moreover, reducing one machine from the available set of fastest machines along with HCMM results in a lower expected cost; and reducing the number of participating slowest machines results in a larger expected cost. 
 
Now that HCMM asymptotically solves the feasible budget-constrained problem in (\ref{eq:opt-cnstrnd}), i.e. for $C \geq C_{\text{min}}$, finding the optimal number of machines of each type to use in HCMM requires combinatorial search over all possible allocations.  However, as Lemma \ref{max-min} suggests, using faster machines induces a larger cost. Further, the computation time increases if we decrease the number of machines. This is the motivation behind our heuristic algorithm for an efficient search to find the number of machines of each type to include in HCMM, which we describe next.  
 \begin{algorithm}[h!]
\caption{Heuristic Search}\label{alg}
\begin{algorithmic}[1]
\Procedure{Heuristic Search}{}
\State $(n_1,\cdots,n_K) \gets (N_1,\cdots,N_K)$
\BState \emph{top}:
\State Run HCMM with $(n_1,\cdots,n_K)$
\If {$\text{cost}\big(\text{HCMM}(n_1,\cdots,n_K)\big) > C$} 
\State $n_j \gets n_j-1$ where $j= \max \{ k: n_k>0 \} $
\State  \textbf{goto} \emph{top}
\Else
\State \Return $(n_1,\cdots,n_K)$
\EndIf
\EndProcedure
\end{algorithmic}
\end{algorithm}

 First, Algorithm \ref{alg} runs HCMM algorithm using all machines, i.e., $n_k=N_k$ for each $k\in [K]$. Then, it calculates the corresponding cost according to (\ref{cost}). If the cost is $>C$, it starts to decrease the number of available fastest machines, i.e. $n_K\leftarrow n_K-1$, and runs HCMM again. While the cost is $>C$, the algorithm keeps decreasing the number of used fast machines till $n_K=0$. Then, the algorithm sets $n_K=0$ and starts decreasing $n_{K-1}$ and so on, until a feasible cost is achieved. Thus, the algorithm returns $(N_1,\cdots,N_{j},n_{j+1},0,\cdots,0)$ which is the first tuple that satisfies the cost constraint. Therefore, the search space complexity of the heuristic is $\mathcal{O}(N_1+\cdots+N_K)=\mathcal{O}(N)$ which is more efficient than the exhaustive search where the complexity is $\mathcal{O}(N_1\cdots N_K)$. The pseudo-code in Algorithm \ref{alg} summarizes the heuristic.

\begin{example*}\label{ex}
In this example, we consider two different scenarios to  demonstrate the application of the proposed heuristic search algorithm. For the cost model, we assume $\gamma=2$ and $\kappa=1$, i.e. $c=\mu^2$. Further, we consider the task of computing $r=100$ equations.

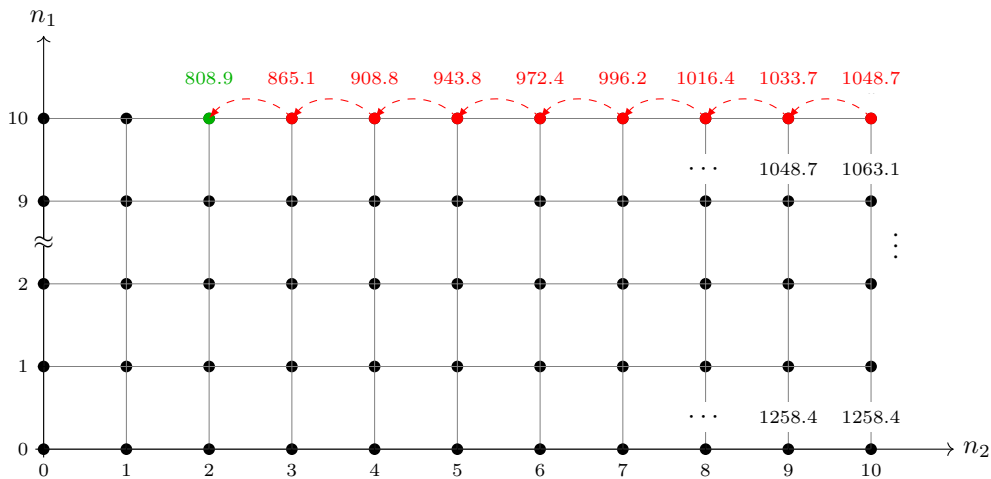
\begin{figure*}[t]
\centering
\begin{tikzpicture}
 \foreach \x in {0,...,10}
    \foreach \y in {0,...,4}
    {
    \fill (\x,\y) circle (2pt);
    }
\draw[help lines] (0,0) grid (10,4);
\draw[->] (-0.1,0) -- (11,0) node[right] {$n_2$};
\draw[->] (0,-0.1) -- (0,5) node[above] {$n_1$};
\draw (0,2) -- node[fill=white,inner sep=-1.25pt,outer sep=0,anchor=center]{$\approx$} (0,3);
\foreach \x/\xtext in {0/0 , 1/1, 2/2, 3/3, 4/4 , 5/5,6/6,7/7,8/8,9/9,10/10}
\draw[shift={(\x,0)}] (0pt,2pt) -- (0pt,-2pt) node[below] {$\scriptstyle\xtext$};
\foreach \y/\ytext in {0/0 ,1/1, 2/2}
\draw[shift={(0,\y)}] (2pt,0pt) -- (-2pt,0pt) node[left] {$\scriptstyle\ytext$};
\foreach \y/\ytext in {9/9,10/10}
\draw[shift={(0,\y-6)}] (2pt,0pt) -- (-2pt,0pt) node[left] {$\scriptstyle\ytext$};
 \foreach \x in {3,...,10}
    \foreach \y in {4}
    {
    \fill[fill=red] (\x,\y) circle (2pt);
    }
    \fill[fill=black!30!green] (2,4) circle (2pt);
\draw (10cm,4.2cm) node[color=black,above,fill=white] {$\scriptstyle a$};
\draw (8cm,4.2cm) node[color=black,above,fill=white] {$ \cdots$};
\draw (10.3cm,2.2cm) node[color=black,above,fill=white] {$ \vdots$};

\draw (10cm,4.3cm) node[color=red,above,fill=white] {$\scriptstyle 1048.7$};
\draw (9cm,4.3cm) node[color=red,above,fill=white] {$\scriptstyle 1033.7$};
\draw (8cm,4.3cm) node[color=red,above,fill=white] {$\scriptstyle 1016.4$};
\draw (7cm,4.3cm) node[color=red,above,fill=white] {$\scriptstyle 996.2$};
\draw (6cm,4.3cm) node[color=red,above,fill=white] {$\scriptstyle 972.4$};
\draw (5cm,4.3cm) node[color=red,above,fill=white] {$\scriptstyle 943.8$};
\draw (4cm,4.3cm) node[color=red,above,fill=white] {$\scriptstyle 908.8$};
\draw (3cm,4.3cm) node[color=red,above,fill=white] {$\scriptstyle 865.1$};
\draw (2cm,4.3cm) node[color=black!30!green,above,fill=white] {$\scriptstyle 808.9$};

\draw (10cm,3.2cm) node[color=black,above,fill=white] {$\scriptstyle 1063.1$};

\draw (9cm,3.2cm) node[color=black,above,fill=white] {$\scriptstyle 1048.7$};

\draw (8cm,3.2cm) node[color=black,above,fill=white] {$\cdots$};

\draw (10cm,0.2cm) node[color=black,above,fill=white] {$\scriptstyle 1258.4$};

\draw (9cm,0.2cm) node[color=black,above,fill=white] {$\scriptstyle 1258.4$};

\draw (8cm,0.2cm) node[color=black,above,fill=white] {$\cdots$};

 \coordinate (y10) at (10cm,4cm);
  \coordinate (y9) at (9cm,4cm);
  \coordinate (y8) at (8cm,4cm);
  \coordinate (y7) at (7cm,4cm);
  \coordinate (y6) at (6cm,4cm);
  \coordinate (y5) at (5cm,4cm);
  \coordinate (y4) at (4cm,4cm);
  \coordinate (y3) at (3cm,4cm);
  \coordinate (y2) at (2cm,4cm);
 \draw[color=red,dashed,-latex] (y10) to [bend right=50] (y9);
 \draw[color=red,dashed,-latex] (y9) to [bend right=50] (y8);
 \draw[color=red,dashed,-latex] (y8) to [bend right=50] (y7);
 \draw[color=red,dashed,-latex] (y7) to [bend right=50] (y6);
 \draw[color=red,dashed,-latex] (y6) to [bend right=50] (y5);
 \draw[color=red,dashed,-latex] (y5) to [bend right=50] (y4);
 \draw[color=red,dashed,-latex] (y4) to [bend right=50] (y3);
 \draw[color=red,dashed,-latex] (y3) to [bend right=50] (y2);
\end{tikzpicture}
\caption{Total cost associated with every pair of $(n_1,n_2)$; $0\leq n_1,n_2 \leq 10$.}
\label{table-cost}
\end{figure*}


\begin{figure*}[t]
\centering
\begin{tikzpicture}
 \foreach \x in {0,...,10}
    \foreach \y in {0,...,4}
    {
    \fill (\x,\y) circle (2pt);
    }
\draw[help lines] (0,0) grid (10,4);
\draw[->] (-0.1,0) -- (11,0) node[right] {$n_2$};
\draw[->] (0,-0.1) -- (0,5) node[above] {$n_1$};
\draw (0,2) -- node[fill=white,inner sep=-1.25pt,outer sep=0,anchor=center]{$\approx$} (0,3);
\foreach \x/\xtext in {0/0 , 1/1, 2/2, 3/3, 4/4 , 5/5,6/6,7/7,8/8,9/9,10/10}
\draw[shift={(\x,0)}] (0pt,2pt) -- (0pt,-2pt) node[below] {$\scriptstyle\xtext$};
\foreach \y/\ytext in {0/0 ,1/1, 2/2}
\draw[shift={(0,\y)}] (2pt,0pt) -- (-2pt,0pt) node[left] {$\scriptstyle\ytext$};
\foreach \y/\ytext in {9/9,10/10}
\draw[shift={(0,\y-6)}] (2pt,0pt) -- (-2pt,0pt) node[left] {$\scriptstyle\ytext$};
 \foreach \x in {3,...,10}
    \foreach \y in {4}
    {
    \fill[fill=red] (\x,\y) circle (2pt);
    }
    \fill[fill=black!30!green] (2,4) circle (2pt);
\draw (10cm,4.2cm) node[color=black,above,fill=white] {$\scriptstyle a$};
\draw (8cm,4.2cm) node[color=black,above,fill=white] {$ \cdots$};
\draw (10.3cm,2.2cm) node[color=black,above,fill=white] {$ \vdots$};

\draw (10cm,4.3cm) node[color=red,above,fill=white] {$\scriptstyle 5.24$};
\draw (9cm,4.3cm) node[color=red,above,fill=white] {$\scriptstyle 5.61$};
\draw (8cm,4.3cm) node[color=red,above,fill=white] {$\scriptstyle 6.05$};
\draw (7cm,4.3cm) node[color=red,above,fill=white] {$\scriptstyle 6.55$};
\draw (6cm,4.3cm) node[color=red,above,fill=white] {$\scriptstyle 7.15$};
\draw (5cm,4.3cm) node[color=red,above,fill=white] {$\scriptstyle 7.86$};
\draw (4cm,4.3cm) node[color=red,above,fill=white] {$\scriptstyle 8.73$};
\draw (3cm,4.3cm) node[color=red,above,fill=white] {$\scriptstyle 9.83$};
\draw (2cm,4.3cm) node[color=black!30!green,above,fill=white] {$\scriptstyle 11.23$};

\draw (10cm,3.2cm) node[color=black,above,fill=white] {$\scriptstyle 5.42$};

\draw (9cm,3.2cm) node[color=black,above,fill=white] {$\scriptstyle 5.82$};

\draw (8cm,3.2cm) node[color=black,above,fill=white] {$\cdots$};

\draw (10cm,0.2cm) node[color=black,above,fill=white] {$\scriptstyle  7.86$};

\draw (9cm,0.2cm) node[color=black,above,fill=white] {$\scriptstyle 8.73$};

\draw (8cm,0.2cm) node[color=black,above,fill=white] {$\cdots$};

 \coordinate (y10) at (10cm,4cm);
  \coordinate (y9) at (9cm,4cm);
  \coordinate (y8) at (8cm,4cm);
  \coordinate (y7) at (7cm,4cm);
  \coordinate (y6) at (6cm,4cm);
  \coordinate (y5) at (5cm,4cm);
  \coordinate (y4) at (4cm,4cm);
  \coordinate (y3) at (3cm,4cm);
  \coordinate (y2) at (2cm,4cm);
 \draw[color=red,dashed,-latex] (y10) to [bend right=50] (y9);
 \draw[color=red,dashed,-latex] (y9) to [bend right=50] (y8);
 \draw[color=red,dashed,-latex] (y8) to [bend right=50] (y7);
 \draw[color=red,dashed,-latex] (y7) to [bend right=50] (y6);
 \draw[color=red,dashed,-latex] (y6) to [bend right=50] (y5);
 \draw[color=red,dashed,-latex] (y5) to [bend right=50] (y4);
 \draw[color=red,dashed,-latex] (y4) to [bend right=50] (y3);
 \draw[color=red,dashed,-latex] (y3) to [bend right=50] (y2);
\end{tikzpicture}
\caption{Expected time associated with every pair of $(n_1,n_2)$; $0\leq n_1,n_2 \leq 10$.}
\label{table-Et}
\end{figure*}
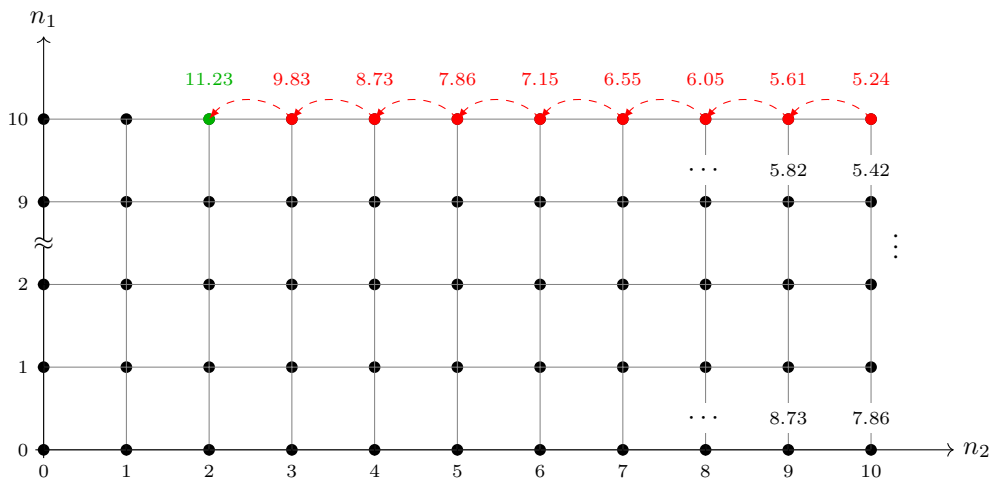

\begin{itemize}
\item \textbf{Scenario $\mathbf{1}$:} Two types of machines are available  parameterized by $(a_1,\mu_1)=(0.5,2)$ and $(a_2,\mu_2)=(0.25,4)$, assuming $10$ machines available of each type. Further, the available budget is $C=860$. Using Lemma \ref{max-min}, the minimum and maximum induced costs are $C_{\text{min}}=629.2$ and $C_{\text{max}}=1258.4$. As $C \geq C_{\text{min}}$, there exists an HCMM load allocation which is asymptotically optimal per (\ref{eq:opt-cnstrnd}). Applying the proposed heuristic search, it takes $9$ iterations (see Fig. \ref{table-cost} and \ref{table-Et}) to arrive at the load allocation $(n_1,n_2)=(10,2)$ which corresponds to the expected cost $808.9$ and average execution time $\Expc[T_{\mathsf{HCMM}}]=11.23$.

\item \textbf{Scenario $\mathbf{2}$:} Three types of machines are available which are parameterized by $(a_1,\mu_1)=(1,1),\, (a_2,\mu_2)=(0.5,2)$ and $(a_3,\mu_3)=(0.125,8)$, assuming $10$ machines of each type available. Further, the available budget is $C=475$. Using Lemma \ref{max-min}, the minimum and maximum induced costs for the task of computing $r=100$ equations are $C_{\text{min}}=314.6$ and $C_{\text{max}}=2516.8$ respectively. It takes $15$ iterations for the proposed heuristic search algorithm to arrive at the tuple $(n_1,n_2,n_3)=(10,6,0)$. This corresponds to the expected cost $486.2$ and the average time $\Expc[T_{\mathsf{HCMM}}]=14.3$.
\end{itemize}
\end{example*}
\section{Conclusion}
In this paper, we proposed a coding framework for distributed matrix-vector multiplication in heterogeneous cloud computing environments. In particular, we considered two distributions for machines' run-times, i.e. shifted exponential and shifted Weibull  and tackled the intractable problem of minimizing the average run-time of a computation task over all possible load allocations by proposing a tractable alternative formulation. The solution to the alternative problem  established our proposed HCMM load allocation scheme which we proved to be asymptotically optimal. We also demonstrated the speedup of HCMM over three benchmark load allocation schemes and presented  both the numerical and the experimental results. Experiments over Amazon EC2 clusters  demonstrate that HCMM combined with LT codes and peeling decoders can provide significant gains in the average overall execution time. Moreover, we argued that HCMM is the asymptotically optimal allocation in budget-constrained scenarios as well, which led to providing a heuristic search in order to find a (sub)optimal load-machine assignment for a given set of machines while satisfying a pre-defined budget constraint.

\section{Acknowledgment}
This material is based upon work supported by Defense Advanced Research Projects Agency (DARPA) under Contract No. HR001117C0053, ARO award W911NF1810400, NSF grants CCF-1703575, ONR Award No. N00014-16-1- 2189, CCF-1763673, CCF-1755808 and the UC Office of President under grant No. LFR-18-548175. The views, opinions, and/or findings expressed are those of the author(s) and should not be interpreted as representing the official views or policies of the Department of Defense or the U.S. Government.

\bibliographystyle{ieeetr}
\bibliography{biblio}
\appendix

\section{}
\label{lemma1proof}
\textbf{McDiarmid's Inequality:} Let $X_1,\cdots,X_n$ be independent random variables taking values in $\mathcal{X}$. Further, let the function $f:\mathcal{X}^n \to \mathbb{R}$ be $L_i$-Lipschitz for all $i \in [n]$, that is
\begin{align}
| f(x_1,\cdots,x_i,\cdots,x_n) - f(x_1,\cdots,x'_i,\cdots,x_n)| \leq L_i,\nonumber
\end{align}
for any $x_1,\cdots,x_n,x'_i \in \mathcal{X}$ and $i\in [n]$. Then, for any $\epsilon>0$,
\begin{flalign}
\Pr\Big[f(X_1,\cdots,X_n)- & \Expc[f(X_1,\cdots,X_n)]\geq\epsilon\Big]  \nonumber\\
& \leq \exp\left(-\frac{2\epsilon^2}{ \sum_{i=1}^n L_i^2}\right),\nonumber
\end{flalign}
\begin{align}
\Pr\Big[\Expc[f(X_1,\cdots,X_n)]- & f(X_1,\cdots,X_n)\geq\epsilon\Big] \nonumber\\
& \leq \exp\left(-\frac{2\epsilon^2}{ \sum_{i=1}^n L_i^2}\right).\nonumber
\end{align}
For each $i$, the aggregate return at time $t$ satisfies $X_i(t)\in\{0,\ell_i\}$. Therefore, we can use McDiarmid's inequality as follows:
\begin{flalign}
\Pr\left[X(t)-\Expc[X(t)]\geq\epsilon\right]\leq \exp\left(-\frac{2\epsilon^2}{ \sum_{i=1}^n \ell_i^2}\right),\nonumber
\end{flalign}
\begin{equation}
\Pr\left[\Expc[X(t)]-X(t)\geq\epsilon\right]\leq \exp\left(-\frac{2\epsilon^2}{ \sum_{i=1}^n \ell_i^2}\right),\nonumber
\end{equation}
for any $\epsilon>0$. Now, we proceed to the proof of Lemma \ref{lemma1}.

 \begin{proof}[Proof of Lemma \ref{lemma1}] Let $t=\tau^*+\delta$ for some $\delta= \Theta\left(\frac{\log n}{\sqrt{n}}\right)$ and $\epsilon=\delta^2$. The claim is that $\Prob\big[X^*(t)\leq r-\epsilon\big]=  o\left(\frac{1}{ n}\right)$. From McDiarmid's inequality, we have
\begin{align}
\label{eq:10}
\Prob\big[X^*(t)\leq r-\epsilon \big] &\leq \exp \left(-\frac{2\big(\Expc[X^*(t)] - r+\epsilon \big)^2}{\sum_{i} \ell_{i}^*{^2}(t)}\right)\nonumber\\
&=\exp\left( -\frac{2\big(ts-r+\epsilon)^2}{\sum_{i} \ell_i^{*2}(t)} \right)\nonumber\\
&= \exp\left( -\frac{2\delta^2 s^2+2\delta^4+4\delta^3s}{ \big((\frac{r}{s})^2+\delta^2+2\delta \frac{r}{s} \big)\sum_i \lambda_i^2} \right)\nonumber\\
& \overset{(g)}{=} e^{-\Theta \left(n \delta^2\right)} = o\left(\frac{1}{ n}\right).\nonumber
\end{align}
In above, equality $(g)$ follows from the fact that $r=\Theta (n)$, $s=\Theta (n)$, $\lambda_i=\Theta(1)$, $\delta=\Theta\left(\frac{\log n}{\sqrt{n}}\right)$, and therefore $\sum_i \lambda_i^2=\Theta(n)$ and $s^2=\Theta(n^2)$. Moreover, if $t^*<\tau^*$, with a positive probability there are less than $r$ equations at the master node by time $t^*$ which is a contradiction. Therefore,  \begin{equation}
\tau^*\leq t^*\leq \tau^*+\delta.\nonumber
\end{equation}
\end{proof}

\begin{IEEEbiographynophoto}{Amirhossein Reisizadeh}
received his B.S. degree form Sharif University of Technology, Tehran, Iran in 2014 and an M.S. degree from University of California, Los Angeles (UCLA) in 2016, both in Electrical Engineering. He is currently pursuing his Ph.D. in  Electrical and Computer Engineering at University of California, Santa Barbara (UCSB). He is interested in using information and coding-theoretic concepts to develop fast and efficient algorithms for large-scale machine learning, distributed computing and optimization.
\end{IEEEbiographynophoto}

\begin{IEEEbiographynophoto}{Saurav Prakash}
received his Bachelor of Technology degree in Electrical Engineering from the Indian Institute of Technology (IIT), Kanpur, India in 2016 and is currently pursuing his Ph.D. in Electrical and Computer Engineering from the University of Southern California (USC), Los Angeles. Saurav received the Annenberg Graduate Fellowship in 2016. He was one of the Viterbi-India fellows in summer 2015. His interests include information theory and data analytics with applications in large-scale machine learning and edge computing. 
\end{IEEEbiographynophoto}

\begin{IEEEbiographynophoto}{Ramtin Pedarsani}
is an Assistant Professor in ECE Department at the University of California, Santa Barbara. He received the B.Sc. degree in electrical engineering from the University of Tehran, Tehran, Iran, in 2009, the M.Sc. degree in communication systems from the Swiss Federal Institute of Technology (EPFL), Lausanne, Switzerland, in 2011, and his Ph.D. from the University of California, Berkeley, in 2015. His research interests include machine learning, information and coding theory, networks, and transportation systems. Ramtin is a recipient of the IEEE international conference on communications (ICC) best paper award in 2014.
\end{IEEEbiographynophoto}

\begin{IEEEbiographynophoto}{A. Salman Avestimehr}
is a Professor at the Electrical and Computer Engineering Department of University of Southern California. He received his Ph.D. in 2008 and M.S. degree in 2005 in Electrical Engineering and Computer Science, both from the University of California, Berkeley. Prior to that, he obtained his B.S. in Electrical Engineering from Sharif University of Technology in 2003. He was an Assistant Professor at the ECE school of Cornell University from 2009 to 2013. He was also a postdoctoral scholar at the Center for the Mathematics of Information (CMI) at Caltech in 2008. His research interests include information theory, coding theory, and large-scale distributed computing and machine learning.

Dr. Avestimehr has received a number of awards for his research, including an Information Theory Society and Communication Society Joint Paper Award, a Presidential Early Career Award for Scientists and Engineers (PECASE) from the White House, a Young Investigator Program (YIP) award from the U. S. Air Force Office of Scientific Research, a National Science Foundation CAREER award, the David J. Sakrison Memorial Prize, and several Best Paper Awards at Conferences. He has been an Associate Editor for IEEE Transactions on Information Theory. He is currently a general Co-Chair of the 2020 International Symposium on Information Theory (ISIT).

\end{IEEEbiographynophoto}

\end{document}